\def\be{\begin{equation}}
\def\ee{\end{equation}}
\def\ba{\begin{array}}
\def\ea{\end{array}}
\def\bc{\begin{center}}
\def\ec{\end{center}}
\def\Rd{{\RR}^d}
\def\ZZ{\rm {{\rm Z}\kern-.48em{\rm Z}}}
\def\RR{\rm \hbox{I\kern-.2em\hbox{R}}}
\def\CC{\rm \hbox{C\kern -.5em{\raise .32ex
\hbox{$\scriptscriptstyle |$}}\kern - .22em{\raise .6ex
\hbox{$\scriptscriptstyle |$}}\kern
.4em}}
\def\CC{ {\mathbf C} }
\def\RR{ {\mathbf R} }
\def\ZZ{ {\mathbf Z} }
\newtheorem{Tm}{Theorem}[section]
\newtheorem{Df}[Tm]{Definition}%[section]
\newtheorem{Cr}[Tm]{Corollary}
\newtheorem{Lm}[Tm]{Lemma}%[section]
\newtheorem{re}[Tm]{Remark} %[section]
\numberwithin{equation}{section}
\begin{document}
\bibliographystyle{plain}

\title[Galerkin Reconstruction in reproducing kernel spaces] {Sampling and  Galerkin reconstruction  in reproducing kernel spaces
}

%%%%%%%%%%%%%%%%%%%%%%%%%%%%%%%%%%%%%%%%%%%%%%%%%%%%%%%%%%%%%%%%%%%%%%%%
\author{Cheng Cheng, Yingchun Jiang,  and Qiyu  Sun}
\address{Cheng: Department of Mathematics, University of Central Florida, Orlando, Florida 32816, USA}
\email{cheng.cheng@knights.ucf.edu}

\address{Jiang: School of Mathematics and Computational Science, Guilin University of
 Electronic Technology, Guilin, Guangxi 541004, China.
}
\email{guilinjiang@126.com}

\address{Sun: Department of Mathematics, University of Central Florida, Orlando, Florida 32816, USA}
\email{qiyu.sun@ucf.edu}

%\thanks{The project is partially supported by the National Natural Science Foundation of China (Nos. 11201094 and 11161014),
%Guangxi Natural Science Foundation (2014GXNSFBA118012), and
%the National Science  Foundation (DMS-1109063 and DMS-1412413).}
%\maketitle

\begin{abstract}
In this paper, we consider sampling in a reproducing kernel subspace of $L^p$.
 We introduce a pre-reconstruction operator associated with a sampling scheme and propose a Galerkin reconstruction
  in general Banach space setting. We show that the proposed Galerkin method  provides a quasi-optimal approximation, and the corresponding
Galerkin equations  could be solved by an iterative approximation-projection algorithm.
  We also  present  detailed analysis and numerical simulations of the Galerkin method for reconstructing signals with finite rate of innovation.
\end{abstract}

\subjclass[2010]{94A20,  46E22,  65J22}
%Primary 41A15,42C15, 46A35, 46E15, 46N99, 47B37}
%{\bf AMS 2000 Subject Classification:}  94A20  46E22  65J22. %42C15.

\keywords{sampling, Galerkin
 reconstruction,  oblique projection, reproducing kernel space, finite rate of innovation,   iterative approximation-projection algorithm}

 \maketitle

\section{Introduction}

The celebrated Whittaker-Shannon-Kotelnikov's sampling theorem states that a
 bandlimited signal can be recovered from its samples taken at a rate greater than twice the bandwidth \cite{sire49, whittaker35}.
 In last two decades, that paradigm  has been extended
 to represent signals in a shift-invariant space \cite{aksiam2001, astjfaa2005, unser2000},
 signals with finite rate of innovation \cite{vetterli07, mishalieldar2011, panbludragotti14, sunaicm08, sunsiam06, vetterli02}, and signals in a reproducing kernel space \cite{christensen12, garcia13, hns09,  ns2010, nw91}.

\smallskip

In this paper, we consider signals living in a  reproducing kernel space (RKS) of the form
\begin{equation} \label{rks.def}
V_{K,p}:= \big\{T_0f:  \ f \in L^p\big\}=
\{f\in L^p:  \ T_0f=f\},\ 1\le p\le \infty,\end{equation}
where $T_0$ is an idempotent integral operator
 with kernel $K$,
 \begin{equation}\label{T0.def}
T_0f(x) := \int_{\RR^d}K(x,y)f(y)dy , \ \ f \in L^p.
\end{equation}
  %The rationality of such a consideration is
 The RKS  has rich geometric structure, lots of flexibility and technical suitability for sampling.
  It has been used for modeling  bandlimited signals, wavelet (spline) signals,
    and signals with finite rate of innovation
    \cite{aksiam2001, ns2010, nw91, sunaicm08, unser2000}.

\medskip

Take a (finite) sampling set $\Gamma$ and consider the sampling scheme
 \begin{equation*} %\label{samplingscheme.def}
 f\longmapsto \{f(\gamma_n), \gamma_n\in \Gamma\}, \ \ f\in V_{K,p}.\end{equation*}
 We are interested in
 finding a quasi-optimal linear approximation $Rf$, depending completely on the sampling data,  in a reconstruction space $U$ for a signal $f\in V_{K,p}$,
 $$\|Rf-f\|_p\le C \inf_{h\in U} \|f-h\|_p,\ \ f\in V_{K,p}.$$

\smallskip
In this paper, %Given a sampling set $\Gamma$,
we focus on {\em  pre-reconstruction
 operators}
\begin{equation}\label{rkssampling.def}
S_{\Gamma, \delta} f(x):=\sum_{\gamma_n\in \Gamma} |I_n| f(\gamma_n) K(x, \gamma_n),\ \
f\in V_{K,p},
\end{equation}
where $\delta>0$ and
$\{I_n\subset B(\gamma_n, \delta): \ \gamma_n\in \Gamma\}$ is  a disjoint covering of
\begin{equation*} %\label{omeganeighbor.def}
 B(\Gamma, \delta):=\cup_{\gamma\in \Gamma}  B(\gamma, \delta)
=\cup_{\gamma\in \Gamma} \{x: \  |x-\gamma|\le \delta\}.
\end{equation*}
%the $\delta$-neighborhood
%of the sampling set $\Gamma$.
Our crucial observation is that    $S_{\Gamma, \delta}f(x)$
is a good approximation to   $f(x)$
  when $\delta$ is  sufficiently small and
$x\in B(\Gamma, \delta)$ is far away from the  complement of
$B(\Gamma, \delta)$, see Figure \ref{prereconstruction.fig} in Section \ref{simulation.section}.

%For $f\in V_{K,p}$, % the pre-reconstruction operator $S_{\Gamma, \delta}$ in \eqref{rkssampling.def},
%we obtain
%from \eqref{rksadmissible.lem.pf.eq1}
% that
%\begin{eqnarray}\label{boundaryeffect.eq}
%  |S_{\Gamma, \delta} f(x)-f(x)|\! & \le &\!
%   \int_{\RR^d\backslash B(\Gamma, \delta)}  |K(x, y)| |f(y)| dy \nonumber\\
%   & & +
%   \int_{\RR^{d}} \omega_{\delta}(K)(x, y)| |f(y)| dy\nonumber\\
%& & + \int_{\RR^{2d}}  |K(x, y)| |\omega_{\delta}(K)(y, z)| |f(z)| dzdy\nonumber\\
%& &  +
%\int_{\RR^{2d}}  |\omega_{\delta}(K)(x, y)| |\omega_{\delta}(K)(y, z)| |f(z)| dzdy.
%\end{eqnarray}
%cf. Figure \ref{prereconstruction.fig}.

\smallskip

Associated with the pre-reconstruction operator $S_{\Gamma, \delta}$, we introduce  the  Garlekin method
\begin{equation}\label{galerkinrks.def}
\langle S_{\Gamma, \delta} Rf, g\rangle=\langle S_{\Gamma, \delta} f, g\rangle,\ \ g\in \tilde U\subset L^{p/(p-1)}\end{equation}
to  define a quasi-optimal linear approximation $Rf$ in the reconstruction space $U$, where
$\langle \cdot, \cdot\rangle$ is the standard dual product between  $L^p$
and  $L^{p/(p-1)}$. We recognize that
 the  Galerkin equation \eqref{galerkinrks.def} could be solved by certain iterative approximation-projection algorithm:
\begin{equation}\label{iaprks.def}
 g_0\in U\ \ {\rm and} \ \  g_{m+1}=g_m-P_{U, \tilde U} S_{\Gamma, \delta}g_m+g_0,\  m\ge 0,
\end{equation}
where  $P_{U, \tilde U}$ is an oblique projection  for
the trial-test space pair $(U, \tilde U)$, c.f.
 \cite{afpams98, astca04, fgsiam92,  ns2010, sunwzhou2002}.

\medskip
This paper is organized as follows.
%The organization of this paper is as follows.
In Section \ref{banach.section}, we introduce  the concept of admissibility of pre-reconstruction operators in Banach space setting. We show  that
(sub-)Galerkin reconstruction provides a quasi-optimal approximation (Theorem  \ref{nsqs.thm}), and such (sub-)Galerkin reconstruction exists
 whenever the trial and test spaces are finite-dimensional %, and   the Galerkin equation has a unique solution
%when the trial  and test spaces  have the same dimension
(Theorem \ref{Rexistence.thm}, Corollaries \ref{Rexistence.cor} and \ref{hilbertreconstruction.cor}).
In Section \ref{rks.section}, we discuss  admissibility of the pre-reconstruction operator $S_{\Gamma, \delta}$ in \eqref{rkssampling.def} (Theorem \ref{rksadmissible.thm}).
In that section, we also propose to use the  iterative approximation-projection algorithm \eqref{iaprks.def}
 to solve the Galerkin equation \eqref{galerkinrks.def} (Theorem \ref{galerkinrks.thm} and Lemma \ref{galerkinrks.lem}).
 Lots of signals with finite rate of innovation live in some
 reproducing kernel spaces of the form \eqref{rks.def}.
 In Section \ref{fri.section}, we provide  detailed analysis for pre-reconstruction operators, and we obtain matrix formulation of Galerkin reconstructions
  for signals with finite rate of innovation.
 In last section, we present some numerical simulations to demonstrate our  Galerkin  method.

\section{Sub-Galerkin reconstruction in Banach spaces}
\label{banach.section}

In this section, we consider numerical stability and quasi-optimality  of a (sub-)Galerkin reconstruction in  Banach space setting.

\smallskip

Denote by $\langle \cdot, \cdot\rangle$
 the action between elements in  a Banach space $B$ and its dual space $B^*$.
First we introduce  admissibility of operators for the trial-test space pair.

\begin{Df}\label{admissible.def}
Let $(U, V, B)$ be a triple of Banach spaces with $U\subset V\subset B$, and let $\tilde U\subset B^*$.
We say that a bounded linear operator $S:V\to V$ is  {\rm admissible} for the
trial-test space pair $(U, \tilde U)$
if there exist positive constants $D_1$ and $D_2$ such that
\begin{equation}\label{ad.def1}
\sup_{g\in \tilde U, \|g\|\le 1}|\langle Sf, \ g\rangle|\ge D_1\|f\|\ \ {\rm  for\ all} \ \ f\in U,
\end{equation}
and
\begin{equation}\label{ad.def2}
\sup_{g\in \tilde U, \|g\|\le 1}|\langle Sf, \ g\rangle|\le D_2 \|f\|\ \ {\rm  for\ all} \  \ f\in V.
\end{equation}

%defined on $R$, $V$ be its closed subspace of $H$ for signals to live in and  $U\subseteq V$ is finite-dimensional, $\widetilde{U}$ is a finite-dimensional closed subspace of $H^{\ast}$.

%{\bf Admissible:}
\end{Df}

An admissible operator $S$ for the trial-test space pair $(U, \tilde U)$
 is bounded below on $U$,
$$\|Sf\|\ge D_1\|f\|, \ \  f\in U.$$
 The performance of our proposed (sub-)Galerkin reconstruction depends on  the test space $\tilde U$, particularly
 on  the ratio between  bounds $D_1$
 and  $D_2$
in  \eqref{ad.def1} and \eqref{ad.def2}, see Theorem \ref{nsqs.thm}. In our model for sampling, $S$ is the pre-reconstruction operator $S_{\Gamma, \delta}$ in \eqref{rkssampling.def},
and the triple of Banach spaces contains  the reconstruction space $U$, the reproducing kernel space $V_{K,p}$ and the space $L^p$.

\smallskip

 Next we introduce a  general notion of Galerkin reconstruction.

\begin{Df} Let $S: V\to V$ be a bounded  linear operator,
and  $(U, \tilde U)$ be a trial-test space pair.
 We say that  a linear operator $R:V\to U$ is a {\em Galerkin reconstruction}  if
\begin{equation} \label{r.def0} Rh=h,  \ h\in U\end{equation}
and
\begin{equation} \label{r.def1} \langle SRf, g\rangle= \langle Sf, g\rangle, \ f\in V\ {\rm and} \ g\in \tilde U;\end{equation}
and  a {\em sub-Galerkin reconstruction} if
 \eqref{r.def0}  holds and
%there exist a positive constant $D_3$ such that
\begin{equation}\label{r.def2}
\sup_{g\in \tilde U, \|g\|\le 1} |\langle SRf, \ g\rangle|\le  D_3
\sup_{g\in \tilde U, \|g\|\le 1} |\langle Sf, \ g\rangle|,  \ f\in V,\end{equation}
for some   $D_3>0$.
\end{Df}

In the following theorem,  we establish  numerical stability and quasi-optimality
of (sub-)Galerkin reconstructions associated with  admissible operators.
%, which becomes Cea's lemma in Hilbert space setting.

\begin{Tm} \label{nsqs.thm}
Let $V, U, \tilde U$ be as in Definition \ref{admissible.def}, and
 $S$ be admissible for the pair $(U, \tilde U)$ with bounds $D_1$ and $D_2$.  If  $R: V\to U$
 is a sub-Galerkin reconstruction with bound $D_3$, then
\begin{itemize}
\item [{(i)}] $R$ is  numerically stable,
 \begin{equation*} %\label{r.def3}
\|Rf\|\le \frac{D_2D_3}{D_1} \|f\|,  \ f\in V. %\ \ {\rm for \ all} \ \ f\in V.
\end{equation*}
\item [{(ii)}] $R$ is quasi-optimal,
\begin{equation*} %\label{r.def4}
\|Rf-f\|\le \frac{D_1+D_2 D_3}{D_1}
\inf_{h\in U} \|f-h\|, \ f\in V. % \ \ {\rm for \ all}  \ \  f\in V.
\end{equation*}
\end{itemize}
 %Suppose there exists a linear operator $R_{U}: V\longrightarrow U$, such that $R_{U}h=h$, for any $h\in U$,  and
% $$\sup \limits_{g\in\widetilde{U},\|g\|_{H^{\ast}}=1}|\langle S(p_{U}f), \ g\rangle|\leq A
% \sup \limits_{g\in\widetilde{U},\|g\|_{H^{\ast}}=1}|\langle Sf, \ g\rangle| \ \eqno(2.3)$$
% for some constant $A>0$ and any $f\in V$. Then for any $f\in V$,
% $$\|P_{U}f\|_{H}\leq C\|f\|_{H},  \e U}min \|f-g\|_{H}$qno(2.4) $$
%$$\|f-P_{U}f\|_{H}\leq(1+ C)\|f-f_{opt}\|_{H},\eqno(2.5)$$
%where $C=A\frac{C_{2}}{C_{1}}$ and $f_{opt}=:\arg \limits_{g\in.
\end{Tm}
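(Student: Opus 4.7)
The plan is to chain together the three ingredients in the hypothesis (lower bound of admissibility on $U$, upper bound on $V$, sub-Galerkin estimate for $R$) and then apply a Lebesgue-type reduction for the quasi-optimality.

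For part (i), I would exploit the fact that $Rf\in U$ so the lower admissibility bound \eqref{ad.def1} applies to $Rf$. Concretely, the plan is to write
\begin{equation*}
D_1\|Rf\|\le \sup_{g\in\tilde U,\ \|g\|\le 1}|\langle SRf,g\rangle|
\le D_3\sup_{g\in\tilde U,\ \|g\|\le 1}|\langle Sf,g\rangle|
\le D_2D_3\|f\|,
\end{equation*}
where the first inequality is \eqref{ad.def1}, the second is the sub-Galerkin estimate \eqref{r.def2}, and the third is the upper admissibility bound \eqref{ad.def2}. Dividing by $D_1$ yields the stability estimate. This is essentially a three-line computation; the only thing to be careful about is that \eqref{ad.def1} is only assumed on $U$, which is precisely where $Rf$ lives.

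For part (ii), I would use the reproduction property \eqref{r.def0} to pivot off an arbitrary $h\in U$. Because $Rh=h$ and $R$ is linear,
\begin{equation*}
Rf-f=R(f-h)-(f-h),
\end{equation*}
so the triangle inequality and part (i) (applied to $f-h\in V$) give
\begin{equation*}
\|Rf-f\|\le \|R(f-h)\|+\|f-h\|\le \Bigl(\frac{D_2D_3}{D_1}+1\Bigr)\|f-h\|
=\frac{D_1+D_2D_3}{D_1}\|f-h\|.
\end{equation*}
Taking the infimum over $h\in U$ finishes the proof.

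There is really no hard step; the argument is the Banach-space analogue of Céa's lemma. The only conceptual point worth emphasizing is that the lower bound \eqref{ad.def1} needs to be available precisely on $U$, which is why it is stated only there in Definition~\ref{admissible.def}; the upper bound \eqref{ad.def2}, in contrast, must hold on the larger space $V$ since it is applied to an arbitrary $f\in V$. Once that asymmetry is recognized, both inequalities follow in a few lines.
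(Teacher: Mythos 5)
Your proof is correct and follows essentially the same route as the paper: part (i) is the identical three-inequality chain combining \eqref{ad.def1}, \eqref{r.def2} and \eqref{ad.def2}, and part (ii) is the same C\'ea-type argument via $\|f-Rf\|\le\|f-h\|+\|R(f-h)\|$ using $Rh=h$ and the stability bound from (i). Nothing is missing.
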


\begin{proof}  (i)\ \   For  $f\in V$, we obtain from \eqref{ad.def1}, \eqref{ad.def2} and \eqref{r.def2} that
$$D_{1}\|Rf\|\leq \sup \limits_{g\in \tilde U,\|g\|\le 1}|\langle SRf, \ g\rangle|
\leq D_{3}
 \sup \limits_{g\in \tilde U,\|g\| \le 1}|\langle Sf, \ g\rangle| \leq D_2D_3\|f\|.$$
This proves numerical stability of the reconstruction operator $R$.

%\smallskip

(ii)\ \
For  $f\in V$ and $h\in U$,% we get from  perfect reconstruction on $U$
%and numerical stability for the reconstruction operator $R$ that
%\eqref{r.def1} and  the first conclusion  that
\begin{eqnarray*}
\|f-Rf\|  & \leq &  \|f-h\|+\|h-Rf\|\\
&= & \|f-h\| +\|R(f-h)\| \leq \frac{D_1+D_2D_3}{D_1}\|f-h\|,\end{eqnarray*}
where we have used the facts that $R$ is a sub-Galerkin reconstruction and  has numerical stability. % has perfect reconstruction on $U$.
Then   quasi-optimality
of the reconstruction operator $R$ holds by taking infinimum over  $h\in U$.
\end{proof}

%Given an admissible operator $S$ for a pair $(U, \tilde U)$,
 %depending on a sampling scheme,
 By Theorem \ref{nsqs.thm},
the existence of a quasi-optimal approximation reduces to
finding a sub-Galerkin reconstruction. Now we show that such
a sub-Galerkin reconstruction always exists when $U$ and $\tilde U$  are finite-dimensional.

\begin{Tm}\label{Rexistence.thm} Let $V, U, \tilde U$ be as in Definition \ref{admissible.def}, and
 $S$ be admissible for the pair $(U, \tilde U)$.
  If
 $U$ and $\tilde U$ are finite-dimensional, then
 there is a sub-Galerkin  reconstruction. % $R:V\to U$.
\end{Tm}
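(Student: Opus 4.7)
The plan is to construct $R$ explicitly via a biorthogonal expansion, exploiting finite-dimensionality to reduce the Galerkin equation to an invertible finite system. Fix a basis $\{u_1,\ldots,u_m\}$ of $U$ where $m=\dim U$. Admissibility, specifically \eqref{ad.def1}, says that the linear map
\[
T:U\to \tilde U^*, \qquad Tu := \langle Su,\cdot\rangle\big|_{\tilde U},
\]
is bounded below, hence injective; in particular $T(u_1),\ldots,T(u_m)$ are linearly independent in $\tilde U^*$ and $m\le \dim\tilde U$.

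Next, I would construct test functionals biorthogonal to the images $T(u_i)$, but inside $\tilde U$ rather than in the larger space $B^{**}$. Consider the restriction map $\tilde U \to (T(U))^*$ sending $g\mapsto(\phi\mapsto \phi(g))$. Its kernel is the annihilator of $T(U)$ in $\tilde U$, which by finite-dimensional rank-nullity has codimension $\dim T(U)=m$ in $\tilde U$. Thus the restriction is surjective onto $(T(U))^*$, so one can choose $\tilde g_1,\ldots,\tilde g_m\in \tilde U$ with $\langle Su_i,\tilde g_j\rangle=\delta_{ij}$ for $1\le i,j\le m$. I would then define
\[
Rf := \sum_{i=1}^m \langle Sf,\tilde g_i\rangle\, u_i, \qquad f\in V.
\]
Biorthogonality immediately gives $\langle SRf,\tilde g_j\rangle=\langle Sf,\tilde g_j\rangle$ for $j=1,\ldots,m$, and applying this to $f=h=\sum a_i u_i\in U$ shows that $\langle Sh,\tilde g_j\rangle=a_j$, hence $Rh=h$, so \eqref{r.def0} holds.

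For the sub-Galerkin estimate \eqref{r.def2}, I would use admissibility on both sides. Since $Rf\in U\subset V$, \eqref{ad.def2} yields
\[
\sup_{g\in\tilde U,\|g\|\le 1}|\langle SRf,g\rangle|\le D_2\|Rf\|,
\]
while the explicit formula for $Rf$ and the definition of the dual norm give
\[
\|Rf\|\le \sum_{i=1}^m|\langle Sf,\tilde g_i\rangle|\,\|u_i\|\le \Big(\sum_{i=1}^m\|u_i\|\,\|\tilde g_i\|\Big)\sup_{g\in\tilde U,\|g\|\le 1}|\langle Sf,g\rangle|.
\]
Combining, \eqref{r.def2} holds with $D_3 = D_2\sum_{i=1}^m\|u_i\|\,\|\tilde g_i\|$.

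The main obstacle is the construction of the biorthogonal family inside the prescribed test space $\tilde U$ rather than in the full dual $B^*$: one must know that $\tilde U$ already separates the finite-dimensional subspace $T(U)\subset\tilde U^*$. This is where the injectivity coming from \eqref{ad.def1} combines with the finite-dimensionality hypothesis via the rank-nullity argument above; without finite-dimensionality the restriction map need not be surjective and the biorthogonal system might not exist in $\tilde U$, which is precisely why the theorem is restricted to this setting.
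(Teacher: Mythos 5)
Your proposal is correct and is essentially the paper's own argument in different clothing: your biorthogonal functionals $\tilde g_j$ are exactly the combinations $\sum_i b_{ij}g_i$ built from the inverse of the matrix $(\langle Sf_i,g_j\rangle)$ in the paper, so the operator $R$ is the same, and both proofs obtain \eqref{r.def2} from \eqref{ad.def2} together with a finite-dimensional norm equivalence (your $\sum_i\|u_i\|\,\|\tilde g_i\|$ versus the paper's constant $C_0$). Your rank--nullity/duality justification for finding the biorthogonal system inside $\tilde U$ is a welcome elaboration of the paper's terse ``we may assume $B$ is nonsingular.''
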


\begin{proof}
 Let $\{f_{i}\}_{i=1}^{m}$ and $\{g_{i}\}_{i=1}^{n}$ be bases of  $U$ and $\widetilde{U}$ respectively.
 By the admissibility of $S$,  we may assume that
  $B:=(\langle Sf_{i},g_{j}\rangle)_{1\leq i, j\leq m}$ is nonsingular.
 Write $B^{-1}=(b_{ij})$ and define linear operator  $R$ by
 \begin{equation*} % \label{Rexistence.thm.pf.eq3}
 Rf:=\sum_{i,j=1}^m \langle Sf, g_i\rangle b_{ij} f_j, \ f\in V.
 \end{equation*}
  Obviously, $R$ satisfies \eqref{r.def0}. Now it remains to
  show that $R$ satisfies
  \eqref{r.def2}.

%As
%the correlation matrix $(\langle Sf_{i},g_{j}\rangle)$ %_{1\leq i\leq m,1\leq j\leq n}$
%has rank $m$
%by  the admissibility of $S$, we may assume that
%its submatrix
%  $B:=(\langle Sf_{i},g_{j}\rangle)_{1\leq i, j\leq m}$ is nonsingular.

%Set $A=(\langle Sf_{i},g_{j}\rangle)_{1\leq i\leq m,1\leq j\leq n}$.
%Due to norm equivalence in a finite-dimensional linear space,
%the first requirement
%\eqref{ad.def1} for the admissibility of $S$  can be reformulated as the existence of a positive constant $C$ such that
%$$\sum_{j=1}^n \Big|\sum_{i=1}^m \langle Sf_{i},g_{j}\rangle  c_i\Big|^2 \geq  C \sum_{i=1}^m |c_i|^2
%\ \ {\rm for \ all} \ \  (c_{1},\ldots,c_{m})\in \RR^m.$$
%Thus $n\ge m$. Without loss of generality, we may further assume that
%\begin{equation}\label{Rexistence.thm.pf.eq1}
%\sum_{j=1}^m \Big|\sum_{i=1}^m \langle Sf_{i},g_{j}\rangle  c_i\Big|^2 >0 \ {\rm for \ all} \ \  0\ne (c_{1},\ldots,c_{m})\in \RR^m.\end{equation}
%This means that $B:=(\langle Sf_{i},g_{j}\rangle)_{1\leq i, j\leq m}$ is a nonsingular matrix.

 %Define linear operator  $R: V\to U$ by
% \begin{equation*} % \label{Rexistence.thm.pf.eq3}
% Rf:=\sum_{i,j=1}^m \langle Sf, g_i\rangle b_{ij} f_j,
% \end{equation*}
% where $B^{-1}:=(b_{ij})$.
%  Obviously, $R$ has perfect reconstruction on $U$. Now it remains to
%  show that $R$ satisfies
%  \eqref{r.def2}.
% % Now we prove that  $R$ is a sub-Galerkin reconstruction.

   Let $\tilde U_\ast$ be  the space % the subspace of $\tilde U$ %: ={\rm span}\{g_{j}: 1\leq j\leq m\}$ be
% %the subspace of $\tilde  U$
  spanned by $\{g_j\}_{j=1}^m$.
  One may verify that   $Rf$ solves  Galerkin equations
 \begin{equation} \label{Rexistence.thm.pf.eq5}
 \langle SRf, g\rangle%=\sum_{i,j, l=1}^m  d_l \langle Sf, g_i\rangle b_{ij} \langle Sf_j, g_l\rangle
 %=\sum_{l=1}^m  d_l \langle Sf, g_l\rangle
 =\langle Sf, g\rangle, \ g\in \tilde U_\ast
 \end{equation}
 %$g=\sum_{l=1}^m d_l g_l$. %\in\widetilde{U}_{\ast}$.
 for any $f\in V$,  and %that
   \begin{equation} \label{Rexistence.thm.pf.eq2}
C_0\|h\|\le \sup \limits_{g\in\widetilde{U}_{\ast},\|g\|\le 1}|\langle Sh, \ g\rangle|,\  h\in U
 \end{equation}
 for some positive constant $C_0$.
Therefore %  for $f\in V$,
 \begin{eqnarray*}
  \sup \limits_{g\in\widetilde{U},\|g\|\le 1}|\langle SRf, \ g\rangle|
  &\leq & D_{2}\|Rf\|\\
& \le &   D_2 (C_0)^{-1}
 \sup \limits_{g\in\widetilde{U}_{\ast},\|g\|\le1}|\langle SRf, \ g\rangle|\\
 &  = & D_2 (C_0)^{-1}
 \sup \limits_{g\in\widetilde{U}_{\ast},\|g\|\le1}|\langle Sf, \ g\rangle|\\
 & \leq &  D_2 (C_0)^{-1} \sup \limits_{g\in\widetilde{U},\|g\|\le1}|\langle Sf, \ g\rangle|, \ f\in V,
  \end{eqnarray*}
 by \eqref{Rexistence.thm.pf.eq5}, \eqref{Rexistence.thm.pf.eq2} and the admissibility of  $S$.
\end{proof}

%If  $U$ and $\tilde U$ have the same dimension,
%Galerkin equations \eqref{Rexistence.thm.pf.eq5} with $\tilde U_\ast$ replaced by $\tilde U$
%has a unique solution and its solution defines a sub-Galerkin reconstruction.

For the case that  $U$ and $\tilde U$ have the same dimension,
 we have

\begin{Cr}\label{Rexistence.cor}
Let $V, U, \tilde U$ be as in Definition \ref{admissible.def}, and
 $S$ be admissible for the pair $(U, \tilde U)$.
   If dimensions of
 $U$ and $\tilde U$ are the same,
 then for  $f\in V$,
 the unique solution of Galerkin equations
 \begin{equation}\label{Galerkin.eq}
 \langle SRf, g\rangle = \langle Sf , g\rangle, \  g \in \tilde{U},
 \end{equation}
defines
a Galerkin  reconstruction. % $R:V\to U$.
\end{Cr}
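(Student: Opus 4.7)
The plan is to reduce Corollary \ref{Rexistence.cor} to the construction already carried out in the proof of Theorem \ref{Rexistence.thm}. First I would fix bases $\{f_i\}_{i=1}^m$ of $U$ and $\{g_j\}_{j=1}^m$ of $\tilde U$, which is possible because the two spaces share a common finite dimension $m$. The central object is the cross-Gramian matrix $B := (\langle Sf_i, g_j\rangle)_{1\le i,j\le m}$, and my first step is to argue that $B$ is invertible: if $h = \sum_i c_i f_i \in U$ satisfies $\langle Sh, g_j\rangle = 0$ for every $j$, then by linearity $\langle Sh, g\rangle = 0$ for all $g\in \tilde U$, and the admissibility bound \eqref{ad.def1} would force $h = 0$. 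Since $B$ is square and injective, it is invertible.

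With $B^{-1} = (b_{ij})$ available, the natural candidate is $Rf := \sum_{i,j} \langle Sf, g_i\rangle\, b_{ij}\, f_j$, the operator already built in the proof of Theorem \ref{Rexistence.thm}. Two short computations then finish the existence part: plugging in $f = f_k$ and using $\sum_i b_{ij}\langle Sf_k, g_i\rangle = \delta_{kj}$ yields $Rf_k = f_k$, hence \eqref{r.def0}; and testing $\langle SRf, g_j\rangle$ against each basis element collapses by the same identity to $\langle Sf, g_j\rangle$, which gives \eqref{Galerkin.eq} after extending by linearity in $g$.

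For uniqueness, I would argue that if $R'f \in U$ is any other solution of \eqref{Galerkin.eq}, then $Rf - R'f \in U$ and $\langle S(Rf - R'f), g\rangle = 0$ for every $g\in \tilde U$, so \eqref{ad.def1} forces $Rf = R'f$. This makes $f\mapsto Rf$ well-defined and, by the explicit formula, linear, hence a Galerkin reconstruction in the sense of the definition above.

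The only real obstacle is the invertibility of $B$, and that step relies squarely on the dimension equality: in Theorem \ref{Rexistence.thm} one had to pass to an auxiliary subspace $\tilde U_\ast \subset \tilde U$ precisely to obtain a square invertible submatrix, which is why only the sub-Galerkin inequality \eqref{r.def2} was available there. Here $\tilde U_\ast = \tilde U$ automatically, so the sub-Galerkin identity upgrades to the full Galerkin identity \eqref{Galerkin.eq}, and beyond some routine index bookkeeping no further argument is needed.
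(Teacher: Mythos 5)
Your proposal is correct and follows essentially the same route the paper intends: it reuses the construction from the proof of Theorem \ref{Rexistence.thm}, with the dimension equality making the cross-Gramian matrix $B$ square, so that injectivity from the admissibility bound \eqref{ad.def1} upgrades to invertibility and $\tilde U_\ast=\tilde U$ turns the sub-Galerkin inequality into the exact Galerkin identity \eqref{Galerkin.eq}. The uniqueness argument via \eqref{ad.def1} is also the expected one, so no further comment is needed.
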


%Let$\{f_i\}_{i=1}^{m}$ and $\{g_j\}_{j}^{m}$ be basis for finite-dimensional space $U$ and $\widetilde{U}$, respectively.
%Now we define the reconstruction by $$ f \in {V} , \ \ \langle SRf, g\rangle=\langle Sf, g\rangle, \ \ \forall \ g \in \widetilde{U}.$$
%In remark 2.1 we have shown that $\langle SRf, g\rangle=c^TAd$. Similarly, we have $\langle Sf, g\rangle=B^Td$, where $B=\{\langle Sf, g_i\rangle\}_{i=1}^{m}$.
%So we need $ A^Tc = B$. By admissability of ${S}$, we know ${A}$ is nonsingular, and then we have a unique solution for $c$.

In Hilbert space setting, %the least squares solution of  Galerkin equations \eqref{Galerkin.eq}
 %define a linear
 we  can establish the following  result for least squares solutions.
%without the equal dimension assumption on $U$ and $\tilde U$ in Corollary \ref{Rexistence.cor}.

\begin{Cr}\label{hilbertreconstruction.cor}
Let $V$ be a Hilbert space, $U$ and $\tilde U$ be linear subspaces of $V$, and let
 $S$ be admissible for the pair $(U, \tilde U)$. If $U$  and $\tilde U$ are finite-dimensional, then
 the least squares solution of  Galerkin equations \eqref{Galerkin.eq},
  \begin{equation*} %\label{hilbertreconstruction.cor.eq1}
  Rf:={\rm argmin}_{h\in U}\sup_{g\in \tilde U, \|g\|\le 1} | \langle S(h-f), g\rangle |, \ \ f\in V,
 \end{equation*}
defines a sub-Galerkin reconstruction with bound $D_3\le 1$. %bounded linear operator   that satisfies \eqref{r.def1} and \eqref{r.def2} with $D_3\le 1$.
\end{Cr}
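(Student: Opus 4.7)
The plan is to convert the supremum seminorm $h \mapsto \sup_{g\in\tilde U,\,\|g\|\le 1}|\langle Sh,g\rangle|$ into a genuine Hilbert-space norm, turning the definition of $Rf$ into a linear least squares problem solved by an honest orthogonal projection in $V$. Contractivity of that projection will then give $D_3\le 1$ for free, and the reproduction property $Rh=h$ on $U$ will fall out of admissibility.

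Since $\tilde U$ is a finite-dimensional (hence closed) subspace of the Hilbert space $V$, I would first observe that for every $h\in V$,
$$\sup_{g\in \tilde U,\,\|g\|\le 1}|\langle Sh,g\rangle|=\|P_{\tilde U}Sh\|,$$
where $P_{\tilde U}$ is the orthogonal projection of $V$ onto $\tilde U$; this uses that $(I-P_{\tilde U})Sh\perp \tilde U$ and then Cauchy--Schwarz tightness on $\tilde U$. With this identification, the definition of $Rf$ reads
$$Rf=\mathrm{argmin}_{h\in U}\|P_{\tilde U}S(h-f)\|,$$
so $P_{\tilde U}SRf$ is the best $V$-approximation to $P_{\tilde U}Sf$ inside $W:=P_{\tilde U}S(U)$. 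Admissibility \eqref{ad.def1} translates to $\|P_{\tilde U}Sh\|\ge D_1\|h\|$ for $h\in U$, so $P_{\tilde U}S$ is injective on $U$; consequently $W$ is a finite-dimensional (hence closed) subspace of $\tilde U$, the minimizer $Rf\in U$ is unique, and $R$ is well-defined and linear. When $f\in U$ the candidate $h=f$ yields a minimum of $0$, and injectivity forces $Rf=f$, giving \eqref{r.def0}.

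For general $f\in V$, the minimizer is characterized by $P_{\tilde U}SRf=P_W(P_{\tilde U}Sf)$, the orthogonal projection of $P_{\tilde U}Sf$ onto $W\subset V$. Norm-contractivity of orthogonal projection then yields
$$\|P_{\tilde U}SRf\|\le \|P_{\tilde U}Sf\|,\qquad f\in V,$$
which, re-expressed via the identification of the first step, is precisely \eqref{r.def2} with $D_3=1$. The only genuinely non-trivial step is the Riesz-type reformulation converting the $\tilde U$-sup into $\|P_{\tilde U}S\cdot\|$; once that is in place, the Pythagorean/contractive nature of orthogonal projection in Hilbert space delivers the optimal constant automatically, so I do not anticipate a serious obstacle.
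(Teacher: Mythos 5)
The paper states this corollary without proof (it only remarks that the $\tilde U=U$ case appears in Adcock--Gataric--Hansen), so there is no line-by-line comparison to make; your argument is correct and complete, and is evidently the intended route. The identification $\sup_{g\in \tilde U,\,\|g\|\le 1}|\langle Sh,g\rangle|=\|P_{\tilde U}Sh\|$, the injectivity of $P_{\tilde U}S$ on $U$ supplied by \eqref{ad.def1}, and the contractivity of the orthogonal projection onto the finite-dimensional subspace $W=P_{\tilde U}S(U)$ together yield \eqref{r.def0}, linearity and well-definedness of $R$, and \eqref{r.def2} with $D_3=1$.
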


The above conclusion on least squares solutions %\eqref{hilbertreconstruction.cor.eq1} %of  Galerkin equations \eqref{Galerkin.eq}
with $\tilde U=U$ has been established by
Adcock, Gataric and Hansen  for non-uniform sampling \cite{agh2014, agh2014b}.

\section{Sampling and Reconstruction in %reproducing kernel spaces
$V_{K,p}$}
\label{rks.section}

To consider sampling and reconstruction in $V_{K,p}$, we {\em always} assume that
 the kernel $K$ of the space $V_{K,p}$ in \eqref{rks.def} satisfies
%the following off-diagonal decay and regularity properties,
\begin{equation}
\label{kernel.req1}
 \|K\|_{\mathcal W} := \max \Big\{\sup_{x\in \RR^d} \|K(x,\cdot)\|_{1} ,
 \sup_{y\in \RR^d} \|K(\cdot,y)\|_1\Big\} < \infty
 \end{equation}
and
\begin{equation}\label{kernel.req2}
 \lim \limits_{\delta \rightarrow 0}\|\omega_\delta(K)\|_{\mathcal W} = 0, %\eqno(3.3)
 \end{equation}
 where
%Given a kernel function
% $K$ on $\RR^d\times \RR^d$, define its  modulus of continuity by
$$\omega_\delta(K)(x,y):= \sup \limits_{|x^\prime|,|y^\prime|\le \delta}|K(x+x^\prime,y+y^\prime)-K(x,y)|.$$
Under the above hypothesis, the integral operator $T_0$ in \eqref{T0.def}
is a bounded operator on $L^p$,
$$\|T_0f\|_p\le \|K\|_{\mathcal W} \|f\|_p, \ \ f\in L^p.$$
More importantly, its range space
 $V_{K,p}$
is a reproducing kernel space
\cite{ns2010}. In this section, we consider admissibility of the pre-reconstruction operator $S_{\Gamma, \delta}$ in \eqref{rkssampling.def}
 and the unique
 Galerkin reconstruction associated with it.

 \smallskip

\subsection{Admissibility, stability and samplability}
  To discuss the admissibility, we introduce the {\em residue} $E(U, F)$
 of  signals in a linear space $U\subset L^p$
outside a measurable set $F$,
\begin{equation*}\label{etf.def}
E(U, F):=
\sup_{0\ne f\in U}\frac{\|f\|_{L^p(\RR^d\backslash F)}}{\|f\|_p},
\end{equation*}
where $\|\cdot\|_{L^p(E)}$ is the $p$-norm on a measurable set $E$.
%, with convenient abbreviation
%$L^p$ and $\|\cdot\|_p$ for $E =\RR^d$.
The reader may refer to \cite{agh2014, lakeybook, jaming2014} for  some applications of residues of bandlimited signals.
 %and its application to sampling and reconstruction of bandlimited signals.
 %to measure the approximation error of bandlimited signals from their samples on a bounded domain.

\begin{Tm}\label{rksadmissible.thm}
%Let $1\le p\le \infty,  \delta>0$, $\Gamma$ be  a countable sampling set,
%$T_0$ be an idempotent integral operator with its
%integral kernel $K$ satisfying \eqref{kernel.req1} and \eqref{kernel.req2},
% and let $V=T_0L^p, S_{\Gamma, \delta}, E(U, B(\Gamma, \delta))$
% be the reproducing kernel space
%in \eqref{rks.def}, the pre-reconstruction operator in
%\eqref{rkssampling.def} and the maximal concentration in \eqref{etf.def} respectively.
Let $V_{K,p}$ and  $S_{\Gamma, \delta}$  be as in \eqref{rks.def} and \eqref{rkssampling.def} respectively.
Assume that $U\subset V_{K,p}$ and $\tilde U\subset L^{p/(p-1)}$.
If
\begin{equation}\label{rksadmissible.thm.eq1}
\sup_{g\in \tilde U, \|g\|_{p/(p-1)}\le 1}
|\langle f, g\rangle| %\Big|\int_{\mathbf R} f(x) g(x)dx\Big|
\ge  D_4
\|f\|_p, \   f\in U \end{equation}
for some  constant $D_4$ satisfying
%\vskip-0.3in
\begin{equation}\label{rksadmissible.thm.eq2}
r_0:= D_4^{-1} \big(E(U, B(\Gamma, \delta)) \|K\|_{\mathcal W}
+ \|\omega_{\delta}(K)\|_{\mathcal W}
\big(1+ \|K\|_{\mathcal W}+
\|\omega_{\delta}(K)\|_{\mathcal W}\big)\big)<1,\end{equation}
then  $S_{\Gamma, \delta}$  is admissible for the pair $(U, \tilde U)$.
\end{Tm}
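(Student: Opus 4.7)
My plan is to reduce the two admissibility inequalities to a single \emph{master error estimate}
\begin{equation*}
\|S_{\Gamma, \delta} f - f\|_p \le \alpha(\delta)\, \|f\|_p + \|K\|_{\mathcal W}\, \|f\|_{L^p(\RR^d \setminus B(\Gamma, \delta))},\qquad f\in V_{K,p},
\end{equation*}
with $\alpha(\delta) := \|\omega_\delta(K)\|_{\mathcal W}\bigl(1 + \|K\|_{\mathcal W} + \|\omega_\delta(K)\|_{\mathcal W}\bigr)$. Once this is available, \eqref{ad.def1} and \eqref{ad.def2} fall out in a few lines, and the residue $E(U, B(\Gamma,\delta))$ enters only on the $U$--side.

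To derive the master estimate I would use the reproducing identity $f = T_0 f$ together with the disjoint covering $B(\Gamma, \delta) = \bigsqcup_n I_n$ to write
\begin{equation*}
f(x) - S_{\Gamma,\delta} f(x) = \sum_{\gamma_n\in\Gamma} \int_{I_n}\bigl[K(x,y) f(y) - K(x,\gamma_n) f(\gamma_n)\bigr]\, dy + \int_{\RR^d \setminus B(\Gamma, \delta)} K(x,y) f(y)\, dy.
\end{equation*}
The outer integral is an honest kernel operator, so by the Schur-type estimate implied by \eqref{kernel.req1} its $L^p$ norm is at most $\|K\|_{\mathcal W}\, \|f\|_{L^p(\RR^d \setminus B(\Gamma,\delta))}$. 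Inside the sum I split the integrand as $[K(x,y) - K(x,\gamma_n)]\,f(y) + K(x,\gamma_n)\,[f(y) - f(\gamma_n)]$. Because $|y-\gamma_n|\le\delta$ on $I_n$, the first part is bounded pointwise by $\omega_\delta(K)(x,y)\,|f(y)|$ and contributes $\|\omega_\delta(K)\|_{\mathcal W} \|f\|_p$; for the second part I apply $f = T_0 f$ once more to get $|f(y) - f(\gamma_n)| \le \int \omega_\delta(K)(\gamma_n, z)|f(z)|\,dz$, and then control the resulting discrete Schur kernel $\sum_n |I_n|\,|K(x,\gamma_n)|\,\omega_\delta(K)(\gamma_n, z)$ by converting the Riemann-type sum in $\gamma_n$ into an integral over $B(\Gamma,\delta)$ through $|K(x,\gamma_n)| \le |K(x,y)| + \omega_\delta(K)(x,y)$ for $y\in I_n$. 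The two Schur rows of the resulting integral kernel produce the remaining $\|\omega_\delta(K)\|_{\mathcal W}\|K\|_{\mathcal W}$ and $\|\omega_\delta(K)\|_{\mathcal W}^2$ contributions in $\alpha(\delta)$.

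With the master estimate in hand, \eqref{ad.def2} follows immediately for $f\in V_{K,p}$ from $\|f\|_{L^p(\RR^d\setminus B(\Gamma,\delta))} \le \|f\|_p$, the triangle inequality, and $|\langle S_{\Gamma,\delta} f, g\rangle| \le \|S_{\Gamma,\delta} f\|_p \|g\|_{p/(p-1)}$, yielding $D_2 = 1 + \alpha(\delta) + \|K\|_{\mathcal W}$. For $f\in U$ the residue definition gives $\|f\|_{L^p(\RR^d \setminus B(\Gamma,\delta))} \le E(U, B(\Gamma,\delta))\|f\|_p$, so the master estimate collapses to $\|S_{\Gamma,\delta} f - f\|_p \le D_4 r_0\,\|f\|_p$; combining with \eqref{rksadmissible.thm.eq1} and the reverse triangle inequality for suprema,
\begin{equation*}
\sup_{g\in \tilde U, \|g\|\le 1} |\langle S_{\Gamma,\delta} f, g\rangle| \ge \sup_{g\in \tilde U,\|g\|\le 1} |\langle f, g\rangle| - \|(S_{\Gamma,\delta} - I)f\|_p \ge D_4(1 - r_0)\|f\|_p,
\end{equation*}
so we may take $D_1 = D_4(1 - r_0)$, which is positive precisely by virtue of \eqref{rksadmissible.thm.eq2}.

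The main technical obstacle is the $|f(y) - f(\gamma_n)|$ term: we have no quantitative modulus of continuity of $f$ at our disposal, so we must recycle the reproducing identity to transfer smoothness from $f$ onto $K$. The discrete cross-kernel $\sum_n |I_n|\,|K(x,\gamma_n)|\,\omega_\delta(K)(\gamma_n,z)$ that this produces is neither a convolution nor an integral operator, and bringing it into the $L^p$--Schur framework requires the discrete-to-integral comparison sketched above. The choice to dominate $|K(x,\gamma_n)|$ on $I_n$ by $|K(x,y)| + \omega_\delta(K)(x,y)$ is what forces both $\|K\|_{\mathcal W}$ and $\|\omega_\delta(K)\|_{\mathcal W}$ inside the parenthesis of $\alpha(\delta)$, and so fixes the sharp form of the smallness condition \eqref{rksadmissible.thm.eq2}.
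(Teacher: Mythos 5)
Your proposal is correct and follows essentially the same route as the paper: your splitting of $K(x,y)f(y)-K(x,\gamma_n)f(\gamma_n)$, after a second application of $f=T_0f$, reproduces exactly the four-term kernel decomposition used in Lemma \ref{Sfupperbound.lem} and in the estimate \eqref{rksadmissible.thm.pf.eq1}, and yields the same bounds $D_2$ (up to an inessential constant) and $D_1=D_4(1-r_0)$. The only cosmetic difference is that you package everything as a single strong $L^p$ estimate for $\|S_{\Gamma,\delta}f-f\|_p$ and invoke H\"older, whereas the paper proves the lower bound in weak form via the adjoint $T_0^*g$; also, centering the modulus $\omega_\delta(K)(\cdot,z)$ at $y\in I_n$ rather than at $\gamma_n$ avoids the discrete-to-integral comparison you flag as the main obstacle.
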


%
%
%The samplability is one of most important topics in sampling theory, see for instance
% \cite{coaam09, fgjfa89, fgmm89, frjfaa05,  gmm91} for signals in a co-orbit space, and
%
%

 Given a sampling set $\Gamma$, we say that the sampling scheme %\eqref{samplingscheme.def}
 \begin{equation}\label{samplingscheme.def22}
 U\ni f\longmapsto %f(\Gamma):=
 \{f(\gamma_n), \gamma_n\in \Gamma\}\end{equation}
   has {\em weighted $\ell^p$-stability on $U$}
 if  there exist  positive constants
 $C_1, C_2$ and $\delta$ such that
 \begin{equation*}
 C_1\|f\|_p\le  \Big(\sum_{\gamma_n\in \Gamma} |I_n| |f(\gamma_n)|^p\Big)^{1/p}\le C_2\|f\|_p, \ f\in U,
 \end{equation*}
 if $1\le p<\infty$, and
 \begin{equation*}
 C_1\|f\|_\infty\le  \sup_{\gamma_n\in \Gamma}  |f(\gamma_n)|\le C_2\|f\|_\infty, \ f\in U,
 \end{equation*}
 if $p=\infty$,
 where  $\{I_n\subset B(\gamma_n, \delta), \gamma_n\in \Gamma\}$ is  a disjoint covering of
the $\delta$-neighborhood $B(\Gamma, \delta)$ of the sampling set $\Gamma$.
Weighted stability of a sampling scheme  implies its  unique  determination. It is an important concept for robust signal reconstruction, see \cite{aksiam2001, astca04,bg2013,eldar05, ns2010,sunsiam06,sunxian2014, sunwzhou2002, unser2000} and  references here.
The following result  connects the  weighted $\ell^p$-stability of
a sampling scheme  with the admissibility of a pre-reconstruction operator.

\begin{Tm}\label{stabilityrks.tm}
Let $V_{K,p}$ and $S_{\Gamma, \delta}$ be as in \eqref{rks.def} and
\eqref{rkssampling.def} respectively.
Assume that $U\subset V_{K,p}$ and $\tilde U\subset L^{p/(p-1)}$.
If $S_{\Gamma, \delta}$ is admissible for the pair $(U, \tilde U)$, then
the sampling scheme \eqref{samplingscheme.def22} on $\Gamma$
  has weighted $\ell^p$-stability on $U$.
\end{Tm}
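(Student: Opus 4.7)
The plan is to start from admissibility of $S_{\Gamma,\delta}$ on the pair $(U,\tilde U)$ and then sandwich $\sup_{g\in\tilde U,\|g\|_{p/(p-1)}\le 1}|\langle S_{\Gamma,\delta}f,g\rangle|$ between a constant times the weighted sample norm $\bigl(\sum_n|I_n||f(\gamma_n)|^p\bigr)^{1/p}$ and a constant times $\|f\|_p$. First, by Fubini,
\[
\langle S_{\Gamma,\delta}f,g\rangle=\sum_{\gamma_n\in\Gamma}|I_n|\,f(\gamma_n)\,h_g(\gamma_n),\qquad h_g(y):=\int_{\RR^d}K(x,y)\,g(x)\,dx,
\]
so the discrete H\"older inequality (with conjugate exponent $q=p/(p-1)$) yields
\[
|\langle S_{\Gamma,\delta}f,g\rangle|\le\Big(\sum_n|I_n|\,|f(\gamma_n)|^p\Big)^{1/p}\Big(\sum_n|I_n|\,|h_g(\gamma_n)|^{q}\Big)^{1/q}.
\]
Everything will then reduce to a \emph{discrete-to-continuous} bound of the form $\bigl(\sum_n|I_n|\,|h_g(\gamma_n)|^{q}\bigr)^{1/q}\le C_0\|g\|_q$ for a constant $C_0$ depending only on $\|K\|_{\mathcal W}$ and $\|\omega_\delta(K)\|_{\mathcal W}$.

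To prove that bound I would first split $|K(x,\gamma_n)|=|K(x,\gamma_n)|^{1/p}\cdot|K(x,\gamma_n)|^{1/q}$ and apply H\"older to the integral defining $h_g(\gamma_n)$, obtaining $|h_g(\gamma_n)|^q\le\|K\|_{\mathcal W}^{q/p}\int|K(x,\gamma_n)|\,|g(x)|^q\,dx$. Multiplying by $|I_n|$ and using, for any $y\in I_n\subset B(\gamma_n,\delta)$, the pointwise estimate $|K(x,\gamma_n)|\le|K(x,y)|+\omega_\delta(K)(x,y)$ (which follows directly from the definition of $\omega_\delta(K)$ by taking the appropriate shift), the disjointness of the $\{I_n\}$ in $B(\Gamma,\delta)$ converts $\sum_n|I_n||K(x,\gamma_n)|$ into an $L^1$-integral in $x$ bounded by $\|K\|_{\mathcal W}+\|\omega_\delta(K)\|_{\mathcal W}$. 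Interchanging orders of summation and integration and combining the pieces gives the desired bound with $C_0=\|K\|_{\mathcal W}^{1/p}\bigl(\|K\|_{\mathcal W}+\|\omega_\delta(K)\|_{\mathcal W}\bigr)^{1/q}$. Taking the supremum over $g\in\tilde U$ with $\|g\|_q\le 1$ and using the admissibility bound $D_1\|f\|_p\le\sup_g|\langle S_{\Gamma,\delta}f,g\rangle|$ immediately yields the stability lower bound $C_1\|f\|_p\le\bigl(\sum_n|I_n||f(\gamma_n)|^p\bigr)^{1/p}$ with $C_1=D_1/C_0$.

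For the matching upper bound, admissibility is not actually needed: since $f=T_0f$ for $f\in V_{K,p}$, the analogous split-H\"older-plus-modulus-of-continuity argument applied to $f$ itself (with the roles of the two variables of $K$ exchanged) gives $\sum_n|I_n||f(\gamma_n)|^p\le\|K\|_{\mathcal W}^{p/q}\bigl(\|K\|_{\mathcal W}+\|\omega_\delta(K)\|_{\mathcal W}\bigr)\|f\|_p^p$, producing a $C_2$ depending only on kernel data. The case $p=\infty$ (with $q=1$) is handled in parallel, the $\ell^p$-H\"older step being replaced by the trivial $\sup$-bound $\bigl|\sum_n|I_n|f(\gamma_n)h_g(\gamma_n)\bigr|\le\bigl(\sup_n|f(\gamma_n)|\bigr)\sum_n|I_n||h_g(\gamma_n)|$. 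The main technical obstacle is the discrete-to-continuous estimate: arranging the splitting so that $\omega_\delta(K)$ lands on the correct variable and the disjoint covering $\{I_n\}$ of $B(\Gamma,\delta)$ can convert a weighted point-value sum into a bounded $L^1$-integral. Once that trick is in place, the rest is routine H\"older bookkeeping.
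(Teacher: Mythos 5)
Your proof is correct and follows essentially the same route as the paper: admissibility supplies the lower bound on $\|f\|_p$, the reproducing property $f=T_0f$ supplies the upper sample-norm bound, and both rest on the same Schur-type discrete-to-continuous estimate obtained by replacing $K(\cdot,\gamma_n)$ by $K(\cdot,y)$ for $y\in I_n$ up to $\omega_\delta(K)$ and summing over the disjoint covering of $B(\Gamma,\delta)$. The only cosmetic difference is that you run the lower bound through the adjoint quantity $\bigl(\sum_n|I_n|\,|T_0^*g(\gamma_n)|^q\bigr)^{1/q}$ via discrete H\"older, whereas the paper bounds $\|S_{\Gamma,\delta}f\|_p$ by the weighted sample norm directly (following the argument of Lemma \ref{Sfupperbound.lem}) and then uses $\sup_{\|g\|_{p/(p-1)}\le 1}|\langle S_{\Gamma,\delta}f,g\rangle|\le\|S_{\Gamma,\delta}f\|_p$.
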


By the regularity assumption \eqref{kernel.req2} on the reproducing kernel $K$,
the second requirement \eqref{rksadmissible.thm.eq2} in Theorem
\ref{rksadmissible.thm} is satisfied if $\delta$ is sufficiently small and %the $\delta$-neighborhood
$B(\Gamma, \delta)$ % of a sampling set $\Gamma$
 is the whole Euclidean space
$\RR^d$. For the case that $B(\Gamma, \delta)$  contains an open domain $F_0$ but not necessarily the whole space $\RR^d$,  we obtain the following samplability result from
Theorems \ref{rksadmissible.thm} and \ref{stabilityrks.tm}.

%
%In general, we obtain samplability of signals in $U$ via their samples on an open domain $F_0$
%from Theorems \ref{rksadmissible.thm} and \ref{stabilityrks.tm}.

\begin{Cr} Let $U\subset V_{K,p}$ and $D_4$
 be as in  Theorem \ref{rksadmissible.thm}. Assume that $F_0$ is an open domain satisfying
% If  the residue out of an open domain $F_0$  satisfies
$E(U, F_0)\|K\|_{\mathcal W}<D_4$. If
$\Gamma$ is a sampling set with  $B(\Gamma, \delta)\supset F_0$ for some sufficiently small  $\delta>0$,
then
 signals in $U$
are uniquely determined by their samples taken on $\Gamma$.
 \end{Cr}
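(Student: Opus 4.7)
The plan is to verify the hypotheses of Theorem \ref{rksadmissible.thm}, then invoke Theorem \ref{stabilityrks.tm} to obtain weighted $\ell^p$-stability on $U$, from which unique determination follows immediately. The only ingredients beyond these two theorems are a monotonicity property of the residue $E(U,\cdot)$ and the regularity assumption \eqref{kernel.req2} on $K$.

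First I would fix a test space $\tilde U \subset L^{p/(p-1)}$ realizing the constant $D_4$ in \eqref{rksadmissible.thm.eq1}; the natural choice $\tilde U = L^{p/(p-1)}$ makes \eqref{rksadmissible.thm.eq1} hold with $D_4 = 1$ by Hahn--Banach duality, although any admissible choice works. Next I would exploit the monotonicity of the residue in its set argument: since $B(\Gamma,\delta) \supset F_0$, we have $\RR^d \setminus B(\Gamma,\delta) \subset \RR^d \setminus F_0$, and therefore
\begin{equation*}
E(U, B(\Gamma,\delta)) \le E(U, F_0).
\end{equation*}
Combined with the standing assumption $E(U,F_0)\|K\|_{\mathcal W} < D_4$, this yields a strict gap $D_4 - E(U, B(\Gamma,\delta))\|K\|_{\mathcal W} > 0$ that is independent of $\delta$.

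The remaining task is to absorb the $\omega_\delta(K)$ contribution in \eqref{rksadmissible.thm.eq2}. By the regularity hypothesis \eqref{kernel.req2}, $\|\omega_\delta(K)\|_{\mathcal W} \to 0$ as $\delta \to 0$, so the quantity $\|\omega_\delta(K)\|_{\mathcal W}\bigl(1 + \|K\|_{\mathcal W} + \|\omega_\delta(K)\|_{\mathcal W}\bigr)$ becomes smaller than the gap from the previous paragraph once $\delta$ is taken small enough. This gives $r_0 < 1$, i.e.\ \eqref{rksadmissible.thm.eq2}, so Theorem \ref{rksadmissible.thm} produces admissibility of $S_{\Gamma,\delta}$ for the pair $(U,\tilde U)$.

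Finally, Theorem \ref{stabilityrks.tm} upgrades this admissibility into weighted $\ell^p$-stability of the sampling scheme on $\Gamma$. The lower bound in the stability inequality (or its $p=\infty$ analog) implies that every $f \in U$ whose samples on $\Gamma$ all vanish must satisfy $\|f\|_p = 0$; by linearity any two signals in $U$ sharing the same samples on $\Gamma$ coincide, which is exactly the claimed unique determination. The only delicate point is calibrating the threshold on $\delta$: it must be small enough for the $\omega_\delta(K)$ terms to fit inside the gap $D_4 - E(U,F_0)\|K\|_{\mathcal W}$, and this is precisely what \eqref{kernel.req2} guarantees.
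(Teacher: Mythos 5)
Your argument is correct and follows exactly the route the paper intends: the paper gives no separate proof of this corollary, stating only that it follows from Theorems \ref{rksadmissible.thm} and \ref{stabilityrks.tm}, and your write-up supplies precisely the missing details (monotonicity of the residue $E(U,\cdot)$ under set inclusion and the use of \eqref{kernel.req2} to absorb the $\omega_\delta(K)$ terms into the gap $D_4-E(U,F_0)\|K\|_{\mathcal W}$, followed by the stability-implies-injectivity step). No gaps.
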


The samplability of various signals %on whole Euclidean space
is well-studied, % one of fundamental problems in sampling theory,
see, e.g.,
%for instance
 \cite{agh2014b, fgsiam92, gmc92}  for band-limited signals,
 \cite{aksiam2001, unser2000}
 for signals in a shift-invariant space,
 \cite{sunaicm08, sunsiam06} for signals with finite rate of innovation, and
 \cite{hns09, ns2010} for signals in  a reproducing kernel %Hilbert and Banach
  space.

\bigskip

To prove Theorem \ref{rksadmissible.thm},
%
%To establish upper bound estimate
%\eqref{ad.def1} for  admissibility of
% the pre-reconstruction operator  $S_{\Gamma, \delta}$  in
%\eqref{rkssampling.def},
 we need the following lemma.

\begin{Lm}\label{Sfupperbound.lem} Let $V_{K,p}$ and $S_{\Gamma, \delta}$ be as in \eqref{rks.def} and
\eqref{rkssampling.def} respectively. Then
%Let $p, \delta, \Gamma, T_0, K, V, S_{\Gamma, \delta}$
% be as in  Theorem \ref{rksadmissible.thm}.
%Then
\begin{equation*}\label{Sfupperbound.lem.eq1}
\|S_{\Gamma, \delta}f\|_p\le\big (\|K\|_{\mathcal W}+ \|\omega_{\delta}(K)\|_{\mathcal W}\big)
\big(1+ \|\omega_{\delta}(K)\|_{\mathcal W}\big) \|f\|_p, \  f\in V_{K,p}.
\end{equation*}
\end{Lm}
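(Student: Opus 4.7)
The plan is to bound $\|S_{\Gamma,\delta}f\|_p$ by writing $|I_n|=\int_{I_n}dy$, expanding $K(x,\gamma_n)f(\gamma_n)$ around the generic point $y\in I_n$, and handling the resulting error terms via Schur's test, with the reproducing identity $f=T_0f$ entering at one crucial place.

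First I would write
$$S_{\Gamma,\delta}f(x)=\sum_n\int_{I_n}K(x,\gamma_n)f(\gamma_n)\,dy$$
and, for each $y\in I_n\subset B(\gamma_n,\delta)$, use the telescoping decomposition
$$K(x,\gamma_n)f(\gamma_n)=K(x,y)f(y)+[K(x,\gamma_n)-K(x,y)]\,f(y)+K(x,\gamma_n)[f(\gamma_n)-f(y)].$$
Summing and using that $\{I_n\}$ is a disjoint cover of $B(\Gamma,\delta)$ splits $S_{\Gamma,\delta}f$ into three pieces $M+A_1+A_2$: a principal part $M(x)=\int_{B(\Gamma,\delta)}K(x,y)f(y)\,dy$, a kernel-oscillation term $A_1$, and a sample-oscillation term $A_2$.

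Next I would estimate the first two pieces by Schur's test. For $M$, the kernel $K\,\mathbf{1}_{B(\Gamma,\delta)}$ inherits the bound $\|K\|_{\mathcal W}$, giving $\|M\|_p\le\|K\|_{\mathcal W}\|f\|_p$. For $A_1$, the inequality $|\gamma_n-y|\le\delta$ on $I_n$ yields $|K(x,\gamma_n)-K(x,y)|\le\omega_\delta(K)(x,y)$, so that $|A_1(x)|\le\int\omega_\delta(K)(x,y)|f(y)|\,dy$, and Schur gives $\|A_1\|_p\le\|\omega_\delta(K)\|_{\mathcal W}\|f\|_p$.

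The main obstacle is the third term $A_2$, since the sample differences $f(\gamma_n)-f(y)$ are not directly controlled by $\|f\|_p$. Here the reproducing identity $f=T_0f$ becomes essential: it lets me write
$$f(\gamma_n)-f(y)=\int[K(\gamma_n,z)-K(y,z)]\,f(z)\,dz,$$
whose integrand is dominated by $\omega_\delta(K)(y,z)|f(z)|$ because $|\gamma_n-y|\le\delta$. Paired with the crude bound $|K(x,\gamma_n)|\le|K(x,y)|+\omega_\delta(K)(x,y)$, this rewrites $A_2$ as a composition of two integral operators with kernels $|K|+\omega_\delta(K)$ and $\omega_\delta(K)$, whose operator norms on $L^p$ are at most $\|K\|_{\mathcal W}+\|\omega_\delta(K)\|_{\mathcal W}$ and $\|\omega_\delta(K)\|_{\mathcal W}$ respectively. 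Adding the three resulting bounds gives $\|K\|_{\mathcal W}+\|\omega_\delta(K)\|_{\mathcal W}+(\|K\|_{\mathcal W}+\|\omega_\delta(K)\|_{\mathcal W})\|\omega_\delta(K)\|_{\mathcal W}$, which factors as $(\|K\|_{\mathcal W}+\|\omega_\delta(K)\|_{\mathcal W})(1+\|\omega_\delta(K)\|_{\mathcal W})$, exactly the claimed constant.
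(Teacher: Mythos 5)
Your proof is correct and follows essentially the same route as the paper's: both insert $|I_n|=\int_{I_n}dy$, compare $K(x,\gamma_n)f(\gamma_n)$ with $K(x,y)f(y)$ via the oscillation $\omega_\delta(K)$, use the reproducing identity $f=T_0f$ to control the sample differences, and apply Schur-type bounds to reach the same constant. The only difference is cosmetic: the paper substitutes $f(\gamma_n)=\int K(\gamma_n,z)f(z)\,dz$ at the outset and expands into four products of kernels, whereas you keep three terms and split the last one afterward; the resulting estimates coincide term by term.
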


\begin{proof}
Let $\{I_n\}$ be the disjoint covering of
 $B(\Gamma, \delta)$  % used to define the pre-reconstruction operator $S_{\Gamma, \delta}$
 in \eqref{rkssampling.def}.
For $f\in V_{K,p}$, write
\begin{eqnarray}\label{rksadmissible.lem.pf.eq1}
S_{\Gamma, \delta}f(x) & = &  %\sum_{n=1}^N \mu_n f(\gamma_n) K(x, \gamma_n)=\sum \limits_{n=1}^N \int_{\bf I_n} f(\gamma_n)K(x,\gamma_n)dy$$
%$$=\sum \limits_{n=1}^N \int_{\bf I_n} (\int_{\mathbf R} f(z)K(\gamma_n,z)dz)K(x,\gamma_n)dy$$
\sum_{n} \int_{I_n}
\int_{\RR^d} K(x, \gamma_n) K(\gamma_n,z)f(z) dzdy\nonumber\\
& = & \sum_{n} \int_{I_n} \int_{\RR^d} \Big\{
K(x,y) K(y,z) + (K(x, \gamma_n)-K(x,y)) \nonumber\\
& &\
\times K(y,z)
+K(x, y) (K(\gamma_n,z)-K(y,z)) \nonumber\\
 & &\ +(K(x, \gamma_n)-K(x,y))   (K(\gamma_n,z)-K(y,z))
\Big\} f(z)
dzdy\nonumber\\
&  =: &\uppercase\expandafter{\romannumeral1} + \uppercase\expandafter{\romannumeral2}+\uppercase\expandafter{\romannumeral3} + \uppercase\expandafter{\romannumeral4}.
\end{eqnarray}
Observe that
\begin{equation*}
\|\uppercase\expandafter{\romannumeral1}\|_p=
\Big\|\int_{B(\Gamma, \delta)} K(\cdot, y) f(y) dy\Big\|_p\le \|K\|_{\mathcal W}\|f\|_p,
\end{equation*}
\begin{equation*}
\|\uppercase\expandafter{\romannumeral2}\|_p\le
\Big\|\int_{\RR^d} \omega_{\delta}(K)(\cdot, y) |f(y)| dy\Big\|_p\le
\|\omega_{\delta}(K)\|_{\mathcal W}\|f\|_p,
\end{equation*}
\begin{eqnarray*}
\|\uppercase\expandafter{\romannumeral3}\|_p & \le &
\Big\|\int_{\RR^d} \int_{\RR^d}
|K(\cdot, y)| \omega_{\delta}(K)(y, z)  |f(z)| dz dy\Big\|_p\nonumber\\
& \le & \|K\|_{\mathcal W}
\|\omega_{\delta}(K)\|_{\mathcal W}
\|f\|_p,
\end{eqnarray*}
and
\begin{eqnarray*}
\|\uppercase\expandafter{\romannumeral4}\|_p & \le &
\Big\|\int_{\RR^d} \int_{\RR^d}
\omega_{\delta}(K)(\cdot, y) \omega_{\delta}(K)(y, z)  |f(z)| dz dy\Big\|_p\nonumber\\
& \le &
\|\omega_{\delta}(K)\|_{\mathcal W}^2 \|f\|_p.
\end{eqnarray*}
Combining the above four estimates with
\eqref{rksadmissible.lem.pf.eq1} completes the proof. % proves \eqref{Sfupperbound.lem.eq1}.
\end{proof}

We finish this subsection with proofs of Theorems \ref{rksadmissible.thm} and \ref{stabilityrks.tm}.
\begin{proof}[Proof of Theorem \ref{rksadmissible.thm}]
The upper bound estimate
\eqref{ad.def2} for  the  operator $S_{\Gamma, \delta}$ %for the admissibility of the pre-reconstruction operator $S_{\Gamma, \delta}$
 follows immediately from
Lemma \ref{Sfupperbound.lem}.
% and the following trivial inequality,
%$$|\langle S_{\Gamma, \delta} f, g\rangle|\le \|S_{\Gamma, \delta}f\|_p\|g\|_{p/(p-1)},
%%\quad {\rm for \ all}
%\ g\in \tilde U.$$

%Now we establish the lower bound estimate \eqref{ad.def1}.
%for the admissibility of the
%pre-reconstruction operator $S_{\Gamma, \delta}$.
 Define % the linear operator $T^*_0$ on $L^{p/(p-1)}$ by
 $$T^*_0 g(x):=\int_{\RR^d} K(y, x) g(y) dy,\  g\in L^{p/(p-1)}.$$
 For $f\in U$ and $g\in \tilde U\subset L^{p/(p-1)}$ with $\|g\|_{p/(p-1)}\le 1$, we obtain
  \begin{eqnarray} \label{rksadmissible.thm.pf.eq1}
 |\langle S_{\Gamma, \delta}f, g\rangle-\langle f, g\rangle|
& \le &
\Big|\int_{\RR^d\backslash B(\Gamma, \delta)}
 f(x) T_0^* g(x) dx\Big|\nonumber\\
 & &  + \Big|\sum_{n} \int_{I_n} f(\gamma_n)
  (T_0^\ast g)(\gamma_n)-f(x)(T_0^\ast g)(x)dx\Big|\nonumber \\
  & \le & \|K\|_{\mathcal W}
 \|f\|_{L^p(\RR^d\backslash B(\Gamma, \delta))} \nonumber\\
 & & +
\|\omega_{\delta}(K)\|_{\mathcal W}
\big(1+ \|K\|_{\mathcal W}+
\|\omega_{\delta}(K)\|_{\mathcal W}\big)
\|f\|_p,
\end{eqnarray}
where $\{I_n\}$  is the disjoint covering of
 $B(\Gamma, \delta)$
 %used to define the  pre-reconstruction operator $S_{\Gamma, \delta}$
 in \eqref{rkssampling.def}.
This together with
\eqref{rksadmissible.thm.eq1}
and \eqref{rksadmissible.thm.eq2} proves
the lower bound estimate \eqref{ad.def1}
 for  the
 operator $S_{\Gamma, \delta}$. % and hence completes the proof.
\end{proof}

\begin{proof}[Proof of Theorem \ref{stabilityrks.tm}]
Take $f\in V$. Following the argument used in Lemma \ref{Sfupperbound.lem},
we obtain
\begin{equation*}
\big(\|K\|_{\mathcal W}+\|\omega_{\delta}(K)\|_{\mathcal W}\big)^{-1}\|S_{\Gamma, \delta} f\|_p\le
\Big(\sum_{n} |I_n| |f(\omega_n)|^p\Big)^{1/p}
\le  \big(1+\|\omega_\delta(K)\|_{\mathcal W}\big)\|f\|_p
\end{equation*}
for $1\le p<\infty$ and
\begin{equation*}
\big(\|K\|_{\mathcal W}+\|\omega_{\delta}(K)\|_{\mathcal W}\big)^{-1}\|S_{\Gamma, \delta} f\|_\infty\le
\sup_{n} |f(\omega_n)|
\le \|f\|_\infty
\end{equation*}
for $p=\infty$.
The above two estimates together with admissibility of the operator $S_{\Gamma, \delta}$
 complete the proof.
\end{proof}

\subsection{Galerkin reconstruction}
To consider Galerkin reconstruction associated with the operator $S_{\Gamma, \delta}$
on the reproducing kernel space $V_{K,p}$, we introduce
the oblique projection for a pair $(U, \tilde U)$ of Banach spaces.

\begin{Df}
Given $U\subset V_{K,p}$ and $\tilde U\subset L^{p/(p-1)}$,  a bounded operator $P_{U, \tilde U}:V_{K,p}\to U$ is said to be an {\em oblique projection} for the pair $(U, \tilde U)$ if
\begin{equation}\label{projection200}
P_{U, \tilde U} h=h, \ h\in U,
\end{equation}
and
\begin{equation}\label{projection201}
\langle P_{U, \tilde U}f, g\rangle=\langle f, g\rangle, \ f\in V_{K,p}, g\in \tilde U.
\end{equation}
\end{Df}

In Hilbert space setting, %$P_{U, \tilde U}$ is the projection with range $U$ and kernel $\tilde U^\perp$, and such
an oblique projection $P_{U, \tilde U}$ exists when
cosine of the subspace angle between $U$ and $\tilde U^\perp$ is positive \cite{ahp2013,
bg2013, eldar05, tang1999}.
Following the argument used in Theorem \ref{Rexistence.thm}, we can show that
if   $U$ and $\tilde U$ have the same dimension and satisfy the first requirement \eqref{rksadmissible.thm.eq1} of Theorem \ref{rksadmissible.thm},
  then there is  an oblique projection  $P_{U, \tilde U}$ for the pair $(U, \tilde U)$.

 \begin{Tm}\label{galerkinrks.thm} Let $V_{K,p}$ and  $S_{\Gamma, \delta}$  be as in \eqref{rks.def} and \eqref{rkssampling.def} respectively.
 Assume that  $U\subset V_{K,p}$ and $\tilde U\subset L^{p/(p-1)}$ satisfy
\eqref{rksadmissible.thm.eq1} and \eqref{rksadmissible.thm.eq2}, and
 an oblique projection $P_{U, \tilde U}$
associated with the pair $(U, \tilde U)$ exists. Then
 Galerkin equations
\begin{equation}\label{galerkinrks.thm.eq00}
\langle S_{\Gamma, \delta} h, g\rangle= \langle S_{\Gamma, \delta} f, g\rangle, \ g\in \tilde U,
\end{equation}
have a unique solution  $h\in U$ for $f\in V_{K,p}$. Moreover, the  mapping $f\to h$
defines a Galerkin reconstruction.
\end{Tm}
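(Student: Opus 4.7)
The plan is to use the oblique projection $P_{U,\tilde U}$ to reduce the Galerkin equation \eqref{galerkinrks.thm.eq00} to an invertible operator equation on $U$, and then invert by a Neumann series. By Theorem \ref{rksadmissible.thm}, hypotheses \eqref{rksadmissible.thm.eq1} and \eqref{rksadmissible.thm.eq2} imply that $S_{\Gamma,\delta}$ is admissible for the pair $(U,\tilde U)$, a fact I will use repeatedly.

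First I introduce the auxiliary operator $A:U\to U$ by $Ah := P_{U,\tilde U}S_{\Gamma,\delta}h$. The oblique projection identity \eqref{projection201} gives $\langle Ah,g\rangle = \langle S_{\Gamma,\delta}h,g\rangle$ and $\langle P_{U,\tilde U}S_{\Gamma,\delta}f,g\rangle = \langle S_{\Gamma,\delta}f,g\rangle$ for every $g\in\tilde U$, so the Galerkin equation \eqref{galerkinrks.thm.eq00} for $h\in U$ is equivalent to $\langle Ah - P_{U,\tilde U}S_{\Gamma,\delta}f,g\rangle = 0$ for all $g\in\tilde U$. Because the difference $Ah - P_{U,\tilde U}S_{\Gamma,\delta}f$ lies in $U$, hypothesis \eqref{rksadmissible.thm.eq1} upgrades this to the honest $U$-valued equation
\[
Ah = P_{U,\tilde U}S_{\Gamma,\delta}f.
\]

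The central step, and the main technical obstacle, is the contraction estimate $\|I-A\|_{U\to U} \leq r_0 < 1$. For $h\in U$, I use \eqref{projection200} to write $(I-A)h = P_{U,\tilde U}(h - S_{\Gamma,\delta}h) \in U$. Applying \eqref{rksadmissible.thm.eq1} to $(I-A)h$ and then pushing the projection back across the pairing via \eqref{projection201} yields
\[
D_4\,\|(I-A)h\|_p \leq \sup_{g\in\tilde U,\,\|g\|_{p/(p-1)}\leq 1}|\langle h - S_{\Gamma,\delta}h, g\rangle|,
\]
and the right-hand side is bounded by $D_4 r_0 \|h\|_p$ by exactly the chain of estimates \eqref{rksadmissible.thm.pf.eq1} from the proof of Theorem \ref{rksadmissible.thm}, combined with the definition of the residue $E(U,B(\Gamma,\delta))$ and of $r_0$. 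The trick that makes this work is that the oblique projection converts the dual-pairing deficit $\langle(I-S_{\Gamma,\delta})h,g\rangle$ (which we only control through the argument of Theorem \ref{rksadmissible.thm}) into a genuine vector in $U$ to which \eqref{rksadmissible.thm.eq1} can be applied.

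Once the contraction is secured, $A = I - (I-A)$ is invertible on $U$ with $A^{-1} = \sum_{m\geq 0}(I-A)^m$, and $h := A^{-1}P_{U,\tilde U}S_{\Gamma,\delta}f$ is the unique solution of \eqref{galerkinrks.thm.eq00} in $U$. The map $R:V_{K,p}\to U$ defined by $Rf = h$ is linear by linearity of the equation; property \eqref{r.def0} follows since for $f\in U$ one has $Af = P_{U,\tilde U}S_{\Gamma,\delta}f$, whence $Rf = A^{-1}Af = f$, while \eqref{r.def1} is just the Galerkin equation. As a bonus, the Neumann expansion of $A^{-1}$ reproduces exactly the iteration \eqref{iaprks.def} initialized at $g_0 = P_{U,\tilde U}S_{\Gamma,\delta}f$, foreshadowing Lemma \ref{galerkinrks.lem}.
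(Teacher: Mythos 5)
Your proposal is correct and follows essentially the same route as the paper: the contraction estimate $\|(I-P_{U,\tilde U}S_{\Gamma,\delta})h\|_p\le r_0\|h\|_p$ on $U$, obtained by combining \eqref{projection200}, \eqref{projection201}, \eqref{rksadmissible.thm.eq1} and \eqref{rksadmissible.thm.pf.eq1}, is exactly the paper's estimate \eqref{apalgorithm.thm3.pf.eq1} in Lemma \ref{galerkinrks.lem}, and your Neumann series for $A^{-1}$ is precisely the iteration \eqref{iaprks.def} that the paper invokes. The remaining steps (equivalence with the $U$-valued equation via \eqref{rksadmissible.thm.eq1}, uniqueness via invertibility of $P_{U,\tilde U}S_{\Gamma,\delta}$ on $U$, and verification of \eqref{r.def0} and \eqref{r.def1}) also match the paper's argument.
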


To solve  Galerkin equations
\eqref{galerkinrks.thm.eq00}, we need  exponential convergence of
 the iterative approximation-projection algorithm
 \eqref{iaprks.def}.
%
% the following  algorithm:
%\begin{equation}\label {iaprks}
% g_{m+1}=g_m-P_{U, \tilde U}S_{\Gamma, \delta}g_m+g_0,\ \ m\ge 0,
%\end{equation}
%where $g_0\in U$.
The algorithm \eqref{iaprks.def} has been demonstrated to be efficient to  reconstruct
various signals. %  from  their infinite collection of  samples.
The reader may refer to \cite{fgsiam92, sunwzhou2002} for band-limited signals,
\cite{afpams98, astca04} for signals in a shift-invariant space,
and \cite{ns2010} for signals in a reproducing kernel space.

%
%The above modified AP algorithm has boundary effect and it does not recover $f$ in the finite-dimensional space $T$.

\begin{Lm}
\label{galerkinrks.lem} %Let $V_{K,p}$ and $S_{\Gamma, \delta}$  be as in \eqref{rks.def} and \eqref{rkssampling.def} respectively,
% and let $U\subset V_{K,p}, \tilde U\subset L^{p/(p-1)}$
% and $P_{U, \tilde U}$ be as in Theorem \ref{galerkinrks.thm}.
 Let $V_{K,p}, S_{\Gamma, \delta}, U, \tilde U$ and $P_{U, \tilde U}$ be as in Theorem \ref{galerkinrks.thm},
 and let $r_0\in (0, 1)$ be as in \eqref{rksadmissible.thm.eq2}.
 Then for any $g_0\in U$,
the sequence $g_m, m\ge 0$, in the iterative algorithm
\eqref{iaprks.def}
 converges to some $g_\infty\in U$,
\begin{equation}\label{apalgorithm.thm3.eq3}
\|g_m-g_\infty\|_p\le  \frac{r_0^{m+1}}{1-r_0}  \|g_0\|_p,\ \ m\ge 0.
\end{equation}
Moreover, if $g_0=P_{U, \tilde U} S_{\Gamma, \delta} h+\tilde g$ for some $h, \tilde g\in U$, then
\begin{equation}\label{apalgorithm.thm3.eq3*}
\|g_\infty-h\|_p\le \frac{\|\tilde g\|_p}{1-r_0}.
\end{equation}
\end{Lm}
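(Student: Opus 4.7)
The plan is to recognize the iteration \eqref{iaprks.def} as an affine Banach fixed-point scheme $g_{m+1} = Tg_m + g_0$ where $T: U \to U$ is the operator $Tg := g - P_{U,\tilde U} S_{\Gamma,\delta} g$, and to show that $T$ is a contraction on $U$ with constant $r_0$. Once this is done, both estimates \eqref{apalgorithm.thm3.eq3} and \eqref{apalgorithm.thm3.eq3*} follow from standard geometric series manipulations.

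First I would verify the contraction bound $\|Tg\|_p \le r_0 \|g\|_p$ for $g \in U$. The key observation is that $P_{U,\tilde U} g = g$ by \eqref{projection200}, hence $Tg = P_{U,\tilde U}(g - S_{\Gamma,\delta} g)$. Using the oblique projection identity \eqref{projection201} together with the lower-bound hypothesis \eqref{rksadmissible.thm.eq1}, I obtain
\begin{equation*}
\|Tg\|_p \le D_4^{-1} \sup_{h \in \tilde U, \|h\| \le 1}|\langle P_{U,\tilde U}(g - S_{\Gamma,\delta}g), h\rangle| = D_4^{-1} \sup_{h \in \tilde U, \|h\| \le 1}|\langle g - S_{\Gamma,\delta}g, h\rangle|.
\end{equation*}
Applying the pointwise estimate \eqref{rksadmissible.thm.pf.eq1} already established in the proof of Theorem \ref{rksadmissible.thm}, together with the definition of the residue $E(U, B(\Gamma,\delta))$, the right-hand side is bounded by $r_0 \|g\|_p$.

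With the contraction in hand, I would iterate the recursion to get $g_m = \sum_{k=0}^{m} T^k g_0$ and define $g_\infty := \sum_{k=0}^{\infty} T^k g_0 \in U$, whose convergence in $L^p$ is absolute since $\|T^k g_0\|_p \le r_0^k \|g_0\|_p$. The tail estimate $\|g_m - g_\infty\|_p \le \sum_{k=m+1}^{\infty} r_0^k \|g_0\|_p = r_0^{m+1}(1-r_0)^{-1}\|g_0\|_p$ yields \eqref{apalgorithm.thm3.eq3}, and passing to the limit in the recursion shows that $g_\infty$ satisfies $P_{U,\tilde U} S_{\Gamma,\delta} g_\infty = g_0$.

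For \eqref{apalgorithm.thm3.eq3*}, I would substitute $g_0 = P_{U,\tilde U} S_{\Gamma,\delta} h + \tilde g$ into the fixed-point equation to deduce $P_{U,\tilde U} S_{\Gamma,\delta}(g_\infty - h) = \tilde g$, equivalently $(g_\infty - h) - T(g_\infty - h) = \tilde g$. Applying the Neumann series $(I-T)^{-1} = \sum_{k=0}^{\infty} T^k$ on $U$, which is valid by the contraction, gives $g_\infty - h = \sum_{k=0}^{\infty} T^k \tilde g$, and a second geometric-series bound produces the claimed inequality. The only slightly delicate step is the first one — properly chaining the oblique-projection identity with \eqref{rksadmissible.thm.eq1} and the residue estimate from Theorem \ref{rksadmissible.thm} to land exactly on the constant $r_0$ defined in \eqref{rksadmissible.thm.eq2}; the remainder is a direct Banach contraction argument.
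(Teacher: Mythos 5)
Your proposal is correct and follows essentially the same route as the paper: the key contraction estimate $\|g-P_{U,\tilde U}S_{\Gamma,\delta}g\|_p\le r_0\|g\|_p$ on $U$ is obtained exactly as in the paper's proof by chaining \eqref{rksadmissible.thm.eq1}, \eqref{projection201} and the estimate \eqref{rksadmissible.thm.pf.eq1}, and the rest is the standard fixed-point argument. Your use of the explicit partial sums $\sum_k T^k g_0$ and the Neumann series for \eqref{apalgorithm.thm3.eq3*} is only a cosmetic variant of the paper's successive-difference and absorption argument.
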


%difference between of the original signal $f$ and the limit $g_\infty$ of the sequence $g_m, m\ge 0$, in  the iterative algorithm
%\eqref{iaprks.def}
%is dominated by $L^p$-norm of the noise $\tilde f\in V_{K,p}$,

%Next we establish the error estimate \ref{apalgorithm.thm3.eq5}.
%From the consistence condition \eqref{apalgorithm.thm3.eq4}  and \eqref{apalgorithm.thm3.pf.eq1}, we conclude that
%\begin{eqnarray*}
%\|f-g_\infty\| & \le &   (1-r_1)^{-1} \|P S_{\Gamma} (f-g_\infty)\|_p\\
%& = &  (1-r_1)^{-1} \|P(\sum_{\omega\in\Gamma} |I_n| \epsilon(\omega) K(t, \omega)\|_p\\
%& \le &  \frac{\|P\|(\|K\|+\|\omega_\delta(K)\|}{1-r_1}\|\sum_{\omega\in \Gamma} |\epsilon(\omega)| \chi_{T_n}\|_p
%\end{eqnarray*}

\begin{proof} %[Proof of Lemma \ref{galerkinrks.lem}]
Combining \eqref{rksadmissible.thm.eq1},
%\eqref{rksadmissible.thm.eq2}
\eqref{rksadmissible.thm.pf.eq1}
 and \eqref{projection201}, we obtain
\begin{eqnarray}\label{apalgorithm.thm3.pf.eq1}
\|P_{U, \tilde U}S_{\Gamma, \delta}f-f\|_p &\le &
D_4^{-1} \sup_{g\in \tilde U, \|g\|_{p/(p-1)}\le 1}
|\langle P_{U, \tilde U}S_{\Gamma, \delta}f-f, g\rangle|\nonumber\\
& = &
D_4^{-1} \sup_{g\in \tilde U, \|g\|_{p/(p-1)}\le 1}
|\langle S_{\Gamma, \delta}f-f, g\rangle|\nonumber\\
& \le &  r_0 \|f\|_p, \ \  f\in U.
\end{eqnarray}
Observe from   \eqref{iaprks.def} that
$$g_{m+1}-g_m= (I-P_{U, \tilde U}S_{\Gamma, \delta}) (g_m-g_{m-1}),\  m\ge 1.$$
This together with \eqref{apalgorithm.thm3.pf.eq1} proves \eqref{apalgorithm.thm3.eq3}.

Now we prove \eqref{apalgorithm.thm3.eq3*}.
Taking limit in \eqref{iaprks.def} leads to the following consistence condition
\begin{equation} \label{apalgorithm.thm3.pf.eq22}
P_{U,\tilde U} S_{\Gamma, \delta} g_\infty=g_0.
\end{equation}
Replacing  $g_0$ in \eqref{apalgorithm.thm3.pf.eq22} by $P_{U, \tilde U}S_{\Gamma, \delta}h+\tilde g$ gives
%As $f_0=S_{\Gamma, \delta}h+\tilde f$, we get from \eqref{apalgorithm.thm3.pf.eq22} that
\begin{equation*}
P_{U, \tilde U} S_{\Gamma, \delta} (g_\infty-h)=\tilde g.
\end{equation*}
This together with
\eqref{apalgorithm.thm3.pf.eq1} completes the proof. %proves \eqref{apalgorithm.thm3.eq3*}.
\end{proof}

%We finish this section with the proof of Theorem \ref{galerkinrks.thm}.

\begin{proof}[Proof of Theorem \ref{galerkinrks.thm}]
Take $f\in V_{K,p}$, set $g_0=P_{U, \tilde U}S_{\Gamma, \delta} f$, and let $g_\infty\in U$ be the limit of $g_m, m\ge 0$, in the iterative algorithm
\eqref{iaprks.def}.
%Now  we show that $g_\infty$ is a solution of  Galerkin equations
%\eqref{galerkinrks.thm.eq00}.
%, in \eqref{iaprks.def}. Then
The existence of such a limit follows from Lemma
\ref{galerkinrks.lem}.
 Taking limit
in \eqref{iaprks.def} leads to %the consistence condition,
\begin{equation}\label{apalgorithm.thm3.eq4}
P_{U, \tilde U}S_{\Gamma, \delta}f=P_{U, \tilde U} S_{\Gamma, \delta} g_\infty.
\end{equation}
Then
for any $g\in \tilde U$,
\begin{equation}\label{apalgorithm.thm3.eq4a}
\langle S_{\Gamma, \delta} g_\infty, g\rangle=
\langle P_{U, \tilde U} S_{\Gamma, \delta} g_\infty, g\rangle=
\langle P_{U, \tilde U} S_{\Gamma, \delta} f, g\rangle=
\langle S_{\Gamma, \delta} f, g\rangle\end{equation}
by \eqref{projection201} and  \eqref{apalgorithm.thm3.eq4}.
This proves that $g_\infty$ is a solution of Galerkin equations
\eqref{galerkinrks.thm.eq00}.

Next, we show  that $g_\infty$ is the unique solution of Galerkin equations
\eqref{galerkinrks.thm.eq00}.
Let  $h\in U$  be another  solution. %solsatisfies \eqref{galerkinrks.thm.eq00}.
Then
\begin{equation*}
\langle P_{U, \tilde U} S_{\Gamma, \delta} (h-g_\infty), g\rangle=
\langle S_{\Gamma, \delta}(h-g_\infty), g\rangle=0.
\end{equation*}
This together with \eqref{rksadmissible.thm.eq1} implies that
$$P_{U, \tilde U} S_{\Gamma, \delta} (h-g_\infty)=0.$$
Recall from \eqref{apalgorithm.thm3.pf.eq1} that  $P_{U, \tilde U} S_{\Gamma, \delta}$ is invertible on $U$.
Then $h=g_\infty$ and the uniqueness follows.

  %the limit $g_\infty$ depends on $f$ linearly,
Observe that any $f\in U$ satisfies  Galerkin equations
\eqref{galerkinrks.thm.eq00}. This together with  \eqref{apalgorithm.thm3.eq4a}
proves that the unique solution of Galerkin equations
\eqref{galerkinrks.thm.eq00}
defines a Galerkin reconstruction.
\end{proof}

We finish this section with a remark on the iterative approximation-projection algorithm \eqref{iaprks.def}.

\begin{re}{\rm
Given $\delta>0$, a sampling set $\Gamma$ and
 probability measures $\mu_n$ supported on $I_n$, we  define
\begin{equation*}%\label{rkssampling.def2}
\tilde S_{\Gamma, \delta} f(x)= \sum_{\gamma_n\in \Gamma} |I_n| f(\gamma_n) \int_{I_n} K(x, y)  d\mu_n(y),\ \
f\in V_{K,p},
\end{equation*}
where $\{I_n\subset B(\gamma, \delta), \ \gamma_n\in \Gamma\}$ is
a disjoint covering of
 $B(\Gamma, \delta)$.
 % used to define the sampling scheme $S_{\Gamma, \delta}$
 %in \eqref{rkssampling.def}.
 The  operator $\tilde S_{\Gamma, \delta}$ just defined becomes
the sampling operator $S_{\Gamma, \delta}$ in \eqref{rkssampling.def}
when $\mu_n$ are point measures supported on $\gamma_n$, and
the  sampling operator
\begin{equation*}%\label{rkssampling.def}
S_{\Gamma, \delta} f(x)=\sum_{\omega_n\in \Gamma}  f(\gamma_n) \int_{I_n} K(x, y) dy,\ \
f\in V_{K,p}
\end{equation*}
when $\mu_n$ are normalized Lebsegue measure supported on $I_n$.
 Following
the argument used in Theorem \ref{rksadmissible.thm} and Lemma  \ref{galerkinrks.lem}, we can show that the
approximation-projection algorithm \eqref{iaprks.def} with $S_{\Gamma, \delta}$ replaced by $\tilde S_{\Gamma, \delta}$
has exponential convergence if
$$
D_4^{-1} \big(E(U, B(\Gamma, \delta)) \|K\|_{\mathcal W}
+ \|\omega_{2\delta}(K)\|_{\mathcal W}
\big(1+ \|K\|_{\mathcal W}+
\|\omega_{2\delta}(K)\|_{\mathcal W}\big)\big)<1,$$
c.f., the second requirement \eqref{rksadmissible.thm.eq2} in Theorem \ref{rksadmissible.thm}. % is replaced by
% the following a bit stronger condition
}\end{re}

%  we have that
%\begin{eqnarray*}
% & & D_4-
%E(U, B(\Gamma, \delta)) \|K\|_{\mathcal W}
%-\|\omega_{\delta}(K)\|_{\mathcal W}
%\big(1+ \|K\|_{\mathcal W}+
%\|\omega_{\delta}(K)\|_{\mathcal W}\big) \nonumber
%\\
% & \le\! &\frac{\|S_{\Gamma, \delta}f\|_p} {\|f\|_p}
%\le
%\big(\|K\|_{\mathcal W}+ \|\omega_{\delta}(K)\|_{\mathcal W}\big)
%\big(1+ \|\omega_{\delta}(K)\|_{\mathcal W}\big)
%\end{eqnarray*}
%for all nonzero $f\in U$.
%%This together the sampling scheme $S_{\Gamma, \delta}$
%%depends on the sampling set $\Gamma$ and the sampling data

\section{Sampling signals with finite rate of innovation}
\label{fri.section}

A signal with {\em finite rate of innovation} (FRI)
has  finitely many degrees of freedom per unit of time
\cite{vetterli07, mishalieldar2011, panbludragotti14, sunaicm08, sunsiam06, vetterli02}.
 Define the {\em  Wiener amalgam  space} by
 $${\mathcal W}^1:=\Big\{\phi, \  \|\phi\|_{{\mathcal W}^1}:= \sum_{k\in \ZZ} \sup_{0\le x\le 1} |\phi(x+k)|<\infty\Big\}.
 $$
 It is observed in  \cite{sunaicm08} that lots of FRI signals live in a space of the form
\begin{equation}\label{v2phi.def}
V_2(\Phi):=\Big\{ \sum_{i\in \ZZ} c_i \phi_i(\cdot-i), \ \sum_{i\in \ZZ} |c_i|^2<\infty\Big\},
\end{equation}
where the generator $\Phi:=(\phi_i)_{i\in \ZZ}$ satisfies
%is enveloped by a function  in the Wiener amalgam space,
%i.e.,
\begin{equation}
\label{wiener01.def}
\|\Phi\|_{{\mathcal W}^1}:=\big\|\sup_{i\in \ZZ} |\phi_i|\big\|_{{\mathcal W}^1}<\infty
\ \ {\rm and}\ \
\lim_{\delta\to 0}\big \|\sup_{i\in \ZZ}\omega_\delta(\phi_i)\big\|_{{\mathcal W}^1}=0.
\end{equation}
 In this section, we consider Galerkin reconstruction of signals  in  finite-dimensional spaces
 \begin{equation}\label{sisN.def}
 V_{2, L}(\Phi)=\Big\{\sum_{i=-L}^L c_i \phi_i(\cdot-i), \  \sum_{i=-L}^L |c_i|^2<\infty\Big\}, \ L\ge 1.\end{equation}

\subsection{Reproducing kernel spaces}
For $\Phi:=(\phi_i)_{i\in \ZZ}$ and $\tilde \Phi:=(\tilde \phi_j)_{j\in \ZZ}$ satisfying \eqref{wiener01.def}, define their
 correlation matrix by
 $$A_{\Phi, \tilde \Phi}:=
 \big(\langle \phi_i(\cdot-i), \tilde \phi_j(\cdot-j)\rangle\big)_{i,j\in \ZZ}.$$
In this subsection,  we consider when  $V_2(\Phi)$
 and $V_2(\tilde \Phi)$ in \eqref{v2phi.def} are range spaces of some idempotent integral operators  with kernels satisfying \eqref{kernel.req1} and \eqref{kernel.req2}.

 \begin{Tm} \label{frirks.tm}
 Let $\Phi$ and $\tilde \Phi$ satisfy \eqref{wiener01.def}. If
 the correlation matrix
  $A_{\Phi, \tilde \Phi}$ has  bounded inverse  on $\ell^2$, then
 $$V_2(\Phi)=V_{K,2}\quad{\rm and} \quad V_2(\tilde \Phi)=V_{K^*, 2}$$
 for some kernel $K$ satisfying \eqref{kernel.req1} and \eqref{kernel.req2}, where
 $$K^*(x,y):=K(y,x), \ x, y\in \RR.$$
 \end{Tm}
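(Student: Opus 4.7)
The plan is to construct $K$ as the integral kernel of the oblique projection onto $V_2(\Phi)$ along $V_2(\tilde\Phi)^\perp$, i.e.\ the operator that acts as identity on $V_2(\Phi)$ and annihilates $V_2(\tilde\Phi)^\perp$, and then check that this kernel inherits the Wiener--type properties \eqref{kernel.req1} and \eqref{kernel.req2} from $\Phi$ and $\tilde\Phi$. Concretely, let $B=(b_{ij})_{i,j\in\ZZ}$ be the $\ell^2$-bounded inverse of $A_{\Phi,\tilde\Phi}^T$, which exists by hypothesis, and set
$$K(x,y):=\sum_{i,j\in \ZZ}\phi_i(x-i)\,b_{ij}\,\tilde\phi_j(y-j).$$
A short calculation using $B A_{\Phi,\tilde\Phi}^T=I$ shows that $T_0$ reproduces every $\phi_k(\cdot-k)$; hence $T_0$ is idempotent, $\mathrm{Range}(T_0)\subset V_2(\Phi)$, and conversely $V_2(\Phi)\subset\mathrm{Range}(T_0)$, giving $V_{K,2}=V_2(\Phi)$. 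A symmetric argument, now exploiting $A_{\Phi,\tilde\Phi}^{-1}\,A_{\Phi,\tilde\Phi}=I$, shows $V_{K^*,2}=V_2(\tilde\Phi)$.

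The heart of the proof is to show that $B$ itself has $\ell^1$ off-diagonal decay. First I would estimate
$$|A_{ij}| \le \int\sup_k|\phi_k|(x-i)\,\sup_l|\tilde\phi_l|(x-j)\,dx=(\psi * \tilde\psi^{\vee})(i-j),$$
where $\psi:=\sup_i|\phi_i|$ and $\tilde\psi:=\sup_j|\tilde\phi_j|$ lie in $\mathcal W^1$ by \eqref{wiener01.def}. Since $\mathcal W^1\hookrightarrow L^1$ and is stable under convolution, this produces $\alpha\in\ell^1$ with $|A_{ij}|\le \alpha(i-j)$. Then I invoke the Wiener-type inverse-closedness result for convolution-dominated matrices (Gohberg--Baskakov--Sj\"ostrand) to conclude that the $\ell^2$-inverse $B=(A_{\Phi,\tilde\Phi}^T)^{-1}$ also satisfies $|b_{ij}|\le \beta(i-j)$ for some $\beta\in\ell^1$.

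Once this decay is in hand, verification of \eqref{kernel.req1} is a three-line estimate: using $|\phi_i(x-i)|\le\psi(x-i)$, $\|\tilde\phi_j\|_1\le\|\tilde\Phi\|_{\mathcal W^1}$, and the bound $\sum_i\psi(x-i)\le\|\Phi\|_{\mathcal W^1}$, one gets
$$\int|K(x,y)|\,dy\le \|\tilde\Phi\|_{\mathcal W^1}\|\beta\|_1\|\Phi\|_{\mathcal W^1},$$
uniformly in $x$, and symmetrically in the other variable. For \eqref{kernel.req2} I would split
$$K(x{+}x',y{+}y')-K(x,y)$$
into a $\phi_i$-difference and a $\tilde\phi_j$-difference, so that each term contains a factor bounded by $\sup_i\omega_\delta(\phi_i)(x-i)$ or $\sup_j\omega_\delta(\tilde\phi_j)(y-j)$. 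The previous calculation then yields $\|\omega_\delta(K)\|_{\mathcal W}$ dominated by
$$\|\beta\|_1\bigl(\|\sup_i\omega_\delta(\phi_i)\|_{\mathcal W^1}\|\tilde\Phi\|_{\mathcal W^1}+\|\Phi\|_{\mathcal W^1}\|\sup_j\omega_\delta(\tilde\phi_j)\|_{\mathcal W^1}\bigr),$$
which tends to $0$ as $\delta\to 0$ by the second half of \eqref{wiener01.def}.

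The main obstacle is the off-diagonal decay of the inverse matrix $B$: without it, the bilinear expansion of $K$ is only guaranteed to converge in an $L^2$-operator-norm sense and the $\mathcal W$-norm estimates collapse. Everything else (defining $T_0$, identifying the ranges, bounding $\|K\|_{\mathcal W}$ and $\|\omega_\delta(K)\|_{\mathcal W}$) is a routine convolution-type calculation once the Wiener-algebra inverse-closedness has been applied to transfer $\ell^1$-decay from $A_{\Phi,\tilde\Phi}$ to $B$.
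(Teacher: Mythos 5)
Your proposal is correct and follows essentially the same route as the paper: bound $\|A_{\Phi,\tilde\Phi}\|_{\mathcal C_1}$ by $\|\Phi\|_{\mathcal W^1}\|\tilde\Phi\|_{\mathcal W^1}$, invoke Wiener's lemma for the Baskakov--Gohberg--Sj\"ostrand class to transfer the off-diagonal $\ell^1$-decay to the inverse, and define $K$ by the bilinear expansion $\sum_{i,j}\phi_i(x-i)b_{ij}\tilde\phi_j(y-j)$ with the inverse-matrix entries (your $B=(A_{\Phi,\tilde\Phi}^T)^{-1}$ is just the paper's $(b_{ji})$ with transposed indexing). You in fact supply the verification of \eqref{kernel.req1} and \eqref{kernel.req2} that the paper leaves as ``one may verify.''
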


Let ${\mathcal C}_1$ contain all infinite matrices
 $A:=\big(a_{ij}\big)_{i,j\in \ZZ}$ with
 \begin{equation*}\label{gbs.def}
\|A\|_{{\mathcal C}_1}:= \sum_{k\in \ZZ}
\Big(\sup_{i-j=k}|a_{ij}|\Big)<\infty.\end{equation*}
%The above class of infinite matrices is known as
%the {\em Baskakov-Gohberg-Sj\"ostrand class}.
%Any matrix in ${\mathcal C}_1$ is a bounded operator on $\ell^2$.
To prove Theorem \ref{frirks.tm}, we recall Wiener's lemma for
the {\em Baskakov-Gohberg-Sj\"ostrand class}
  ${\mathcal C}_1$,
see \cite{baskakov90, gkwieot89, grochenig10,  sjostrand94, suntams07, sunca11}
and references therein.

\begin{Lm}\label{wiener.lem}
If $A\in {\mathcal C}_1$   has bounded  inverse on $\ell^2$, then its inverse $A^{-1}$ belongs to ${\mathcal C}_1$ too.
\end{Lm}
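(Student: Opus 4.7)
My plan is to prove the lemma by showing that $\mathcal{C}_1$ is a unital Banach $*$-subalgebra of $B(\ell^2)$ continuously embedded therein, and then deducing inverse-closedness from Hulanicki's symmetry criterion. The algebraic structure is the easy part: setting $h_A(k) := \sup_{i-j=k}|a_{ij}|$ one has $\|A\|_{\mathcal{C}_1} = \|h_A\|_{\ell^1(\ZZ)}$, and the entry-wise bound $|(AB)_{ij}| \le \sum_\ell |a_{i\ell}||b_{\ell j}|$ yields $h_{AB} \le h_A * h_B$, so $\|AB\|_{\mathcal{C}_1} \le \|A\|_{\mathcal{C}_1}\|B\|_{\mathcal{C}_1}$. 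The Schur test gives $\|A\|_{B(\ell^2)} \le \|A\|_{\mathcal{C}_1}$, the adjoint preserves the $\mathcal{C}_1$-norm, and $I \in \mathcal{C}_1$.

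Next I would reduce to self-adjoint positive elements. Since $A^*A \in \mathcal{C}_1$ is self-adjoint and invertible on $\ell^2$ whenever $A$ is, and $A^{-1} = (A^*A)^{-1}A^*$, it suffices to invert, within $\mathcal{C}_1$, a positive self-adjoint $H \in \mathcal{C}_1$ with $\sigma_{B(\ell^2)}(H) \subset [m,M] \subset (0,\infty)$. Setting $\widetilde H := I - \tfrac{2}{M+m} H$, one has $\|\widetilde H\|_{B(\ell^2)} \le \tfrac{M-m}{M+m} < 1$; if the $\mathcal{C}_1$-spectral radius of $\widetilde H$ coincides with its $B(\ell^2)$-spectral radius, then the Neumann series $\sum_{n\ge 0} \widetilde H^n$ converges in $\mathcal{C}_1$ and produces $H^{-1} \in \mathcal{C}_1$.

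The main obstacle is therefore the spectral-radius identity $\rho_{\mathcal{C}_1}(H) = \|H\|_{B(\ell^2)}$ for self-adjoint $H \in \mathcal{C}_1$. The inequality $\ge$ is immediate from the continuous embedding; the reverse direction is the substantive step. My plan is a Brandenburg-type iteration exploiting the $C^*$-identity $\|H^{2^k}\|_{B(\ell^2)} = \|H\|_{B(\ell^2)}^{2^k}$ together with an interpolation-type estimate
\begin{equation*}
\|C\|_{\mathcal{C}_1} \le c\, \|C\|_{B(\ell^2)}^{1/2}\, \|C\|_{\mathcal{A}}^{1/2},
\end{equation*}
where $\mathcal{A}$ is an auxiliary Banach algebra (for example, a weighted off-diagonal-decay class) in which $\|H^{2^k}\|_{\mathcal{A}}$ grows at most polynomially in $2^k$. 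Extracting $2^k$-th roots and letting $k\to\infty$ then forces $\rho_{\mathcal{C}_1}(H) \le \|H\|_{B(\ell^2)}$. An equally viable alternative is Sj\"ostrand's Fourier-analytic realization of $\mathcal{C}_1$ as a modulation-type symbol algebra, where classical commutative Wiener techniques directly supply the symmetry; either route delivers the spectral-radius identity and hence the lemma.
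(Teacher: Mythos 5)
The paper itself offers no proof of this lemma: it is quoted as a known result (Baskakov, Gohberg--Kaashoek--Woerdeman, Sj\"ostrand, Sun) with citations only, so your attempt must stand on its own. The algebraic scaffolding you set up is correct: $h_{AB}\le h_A*h_B$ gives submultiplicativity of $\|\cdot\|_{{\mathcal C}_1}$, Schur's test gives the continuous embedding into $B(\ell^2)$, the class is unital and $*$-closed, and the reduction via $A^{-1}=(A^*A)^{-1}A^*$ together with the Neumann series correctly boils the whole lemma down to the spectral-radius identity $\rho_{{\mathcal C}_1}(H)=\|H\|_{B(\ell^2)}$ for self-adjoint $H$ (Hulanicki's criterion). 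The gap sits exactly in the step you yourself call substantive. Your Brandenburg iteration needs the interpolation estimate $\|C\|_{{\mathcal C}_1}\le c\,\|C\|_{B(\ell^2)}^{1/2}\|C\|_{\mathcal A}^{1/2}$ applied to $C=H^{2^k}$ with $\|H^{2^k}\|_{\mathcal A}$ growing polynomially. But if $\mathcal A$ is a \emph{weighted} off-diagonal-decay class, a generic $H$ in the \emph{unweighted} Baskakov--Gohberg--Sj\"ostrand class simply does not belong to $\mathcal A$ --- membership in ${\mathcal C}_1$ gives no decay to spare --- so the right-hand side is infinite and the iteration never starts. To run Brandenburg inside ${\mathcal C}_1$ itself you would instead need a differential-norm inequality such as $\|C^2\|_{{\mathcal C}_1}\le c\,\|C\|_{{\mathcal C}_1}\|C\|_{B(\ell^2)}$ (possibly with logarithmic corrections); this is true but is itself a nontrivial theorem about this specific algebra, not a freebie. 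Moreover the asserted polynomial growth of $\|H^{2^k}\|_{\mathcal A}$ is never justified; submultiplicativity alone gives only exponential growth, and polynomial growth of powers is precisely the hard content one is trying to establish.

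Your fallback route is the one that actually works, but it is misdescribed. Decomposing $A=\sum_k A_k$ into diagonals, one has $\|A\|_{{\mathcal C}_1}=\sum_k\|A_k\|_{B(\ell^2)}$, and conjugation by the unitary modulations $M_t$, $(M_tc)(j)=e^{2\pi i jt}c(j)$, realizes $A(t):=M_tAM_{-t}=\sum_k e^{2\pi ikt}A_k$ as an element of the operator-valued Wiener algebra of $B(\ell^2)$-valued functions on the torus with absolutely summable Fourier coefficients. Since those coefficients do not commute, ``classical commutative Wiener techniques'' do not apply; one needs the Bochner--Phillips noncommutative Wiener lemma (or Baskakov's argument), plus the observations that $A(t)$ is invertible for every $t$ with $A(t)^{-1}=M_tA^{-1}M_{-t}$ and that the $k$-th Fourier coefficient of $A(\cdot)^{-1}$ is supported on the $k$-th diagonal, to conclude $A^{-1}\in{\mathcal C}_1$. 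As written, the proposal identifies the right target but leaves a genuine hole at its central step.
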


\begin{proof}[Proof of Theorem \ref{frirks.tm}]
%Let $h, \tilde h\in {\mathcal W}^1$ such that
%$$|\phi_k(t)|\le h(t)\quad |\tilde \phi_k(t)|\le \tilde h(t), t\in \RR.$$
%Then
%$$ |a_{\Phi, \tilde \Phi}(i,j)|
%\le \int_{\RR} |h(t-i)| |\tilde h(t-j)| dt, i, j\in \ZZ.$$
By direct calculation, we have
%Observe that
%the correlation matrix $A_{\Phi, \tilde \Phi}$ belongs to ${\mathcal C}_1$
%and
$$
\|A_{\Phi, \tilde \Phi}\|_{{\mathcal C}_1}\le \|\Phi\|_{{\mathcal W}^1}
\|\tilde \Phi\|_{{\mathcal W}^1}.
$$
Thus
 the inverse of the correlation matrix $A_{\Phi, \tilde \Phi}$ belongs to the
Baskakov-Gohberg-Sj\"ostrand class
by Lemma \ref{wiener.lem}.
%$$\|(A_{\Phi, \tilde \Phi})^{-1}\|_{{\mathcal C}_1}<\infty$$
Write $(A_{\Phi, \tilde \Phi})^{-1}=(b_{ij})_{i,j\in \ZZ}$. One may verify that the kernel defined by
\begin{equation}\label{kernelphitildephi}
K_{\Phi, \tilde \Phi} (x,y):=\sum_{i,j\in \ZZ} \phi_i(x-i) b_{ji} \tilde \phi_j(y-j)\end{equation}
satisfies all requirements of the theorem.
%
%
%One may verify that
% the kernel $K_{\Phi, \tilde \Phi}$  just defined
% satisfies  \eqref{kernel.req1} and \eqref{kernel.req2},
% and the corresponding integral operators associated with $K_{\Phi, \tilde \Phi}$ and
% $K_{\Phi, \tilde \Phi}^*$  are idempotent  and have their range spaces
% $V_{K_{\Phi, \tilde \Phi}, 2}$ and $V_{K_{\Phi, \tilde \Phi}^*, 2}$ being
% $V_2(\Phi)$ and $V_2(\tilde \Phi)$ respectively.
\end{proof}

\subsection{Admissibility and Galerkin reconstruction}
Given a sampling set $\Gamma=\{\gamma_n\}_{n=1}^N$ ordered as $\gamma_1<\gamma_2<\cdots<\gamma_N$, define
 %the pre-reconstruction operator
\begin{equation}\label{samplingsis.def}
S_{\Phi, \tilde \Phi, \Gamma}f(x) :=  \sum_{n=1}^N \frac{\gamma_{n+1}-\gamma_{n-1}}{2}
f(\gamma_n) K_{\Phi, \tilde \Phi}(x, \gamma_n),\  f\in  V_2(\Phi), %\\
%&=&
%\sum_{i,j\in \ZZ} \sum_{n=1}^N \frac{\gamma_{n+1}-\gamma_{n-1}}{2}
%f(\gamma_n)
%\tilde \phi(\gamma_n-j) b_{\Phi, \tilde \Phi}(j, i) \phi(x-i),
%\
\end{equation}
and
\begin{equation}\label{agammaphi.def}
A_{\Phi, \tilde \Phi, \Gamma}:=
\Big(\sum_{n=1}^N \frac{\gamma_{n+1}-\gamma_{n-1}}{2} \phi_i(\gamma_n-i) \tilde \phi_j(\gamma_n-j)\Big)_{-L\le i, j\le L}, \ L\ge 1,\end{equation}
 where
$\gamma_0=\gamma_1, \gamma_{N+1}=\gamma_N$, and the kernel
$K_{\Phi, \tilde \Phi}$ is given in  \eqref{kernelphitildephi}.
%
%
%\begin{proof} We follow the argument used in the proof of Theorem \ref{rksadmissible.thm}.
%
%For $f\in V_L(\Phi)$ and $g\in V_L(\tilde \Phi)$,
%
%
%For any $f=\sum_k c_k \phi_k(\cdot-k)$, we have
%\begin{equation}
%\|f\|_2\le \|\Phi\|_{{\mathcal W}^1}\|c\|_2
%\end{equation}
%and
%\begin{equation}
%\|c\|_2\le \|B (\langle f, \tilde \Phi\rangle)\|\le \|B\|_{{\mathcal B}(\ell^2)}
%\|f\|_2 \|\tilde \Phi\|_{{\mathcal W}^1}.
%\end{equation}
%Therefore the Riesz upper and lower bounds are dominated by
%$\|\Phi\|_{{\mathcal W}^1}$ and
%$ \|B\|_{{\mathcal B}(\ell^2)}^{-1}  \|\tilde \Phi\|_{{\mathcal W}^1}^{-1}$.
%
%Following the argument, we see that
%\begin{eqnarray}
%|\langle S_{\Phi, \tilde \Phi, \Gamma}f, g\rangle-\langle f, g\rangle|
%& \le  & \int_{[\gamma_1, \gamma_N]^c} |K(x,y)| fhsjjjs+\\
%  & & utbksfsk
%\end{eqnarray}
%
%\end{proof}
In this subsection,  we  investigate  admissibility  of  the  operator $S_{\Phi, \tilde \Phi, \Gamma}$
and its corresponding Galerkin reconstruction,
 c.f. Corollary
\ref{Rexistence.cor}, and Theorems \ref{rksadmissible.thm} and \ref{galerkinrks.thm}.

\begin{Tm}\label{v2phiadmissible.tm}  Let $\Phi$ and $\tilde \Phi$ satisfy \eqref{wiener01.def}. Assume that
the correlation matrix $A_{\Phi, \tilde \Phi}$ has bounded inverse on $\ell^2$.
Then the following statements are equivalent:

  \begin{itemize}

\item [{(i)}]  The $L\times L$ matrix
$A_{\Phi, \tilde \Phi, \Gamma}$ in \eqref{agammaphi.def}
is nonsingular.

\item [{(ii)}] $S_{\Phi, \tilde \Phi, \Gamma}$ is admissible for the pair $(V_{2, L}(\Phi), V_{2, L}(\tilde \Phi))$.

\item[{(iii)}] For any $f\in V_{2}(\Phi)$,
Galerkin equations
\begin{equation}\label{v2phigalerkin.def}
\langle S_{\Phi, \tilde \Phi, \Gamma} h, g\rangle=\langle S_{\Phi, \tilde \Phi, \Gamma} f, g\rangle, \ g\in V_{2, L}(\tilde \Phi)
\end{equation}
have a unique solution  $h$ in $V_{2, L}(\Phi)$.

\item[{(iv)}] For any $g\in V_2(\tilde\Phi)$, dual Galerkin equations

\begin{equation*}\label{v2phigalerkin.def2}
\langle S_{\Phi, \tilde \Phi, \Gamma}f, \tilde h\rangle=\langle S_{\Phi, \tilde \Phi, \Gamma}f, g\rangle, \ f\in V_{2, L}(\Phi)
\end{equation*}
have a unique solution  $\tilde h$ in $V_{2, L}(\tilde\Phi)$.

\end{itemize}

 \end{Tm}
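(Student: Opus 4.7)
The plan is to reduce all four statements to the nonsingularity of $A_{\Phi,\tilde\Phi,\Gamma}$ by first establishing a single bilinear-form identity. For $f=\sum_{|i|\le L} c_i \phi_i(\cdot-i) \in V_{2,L}(\Phi)$ and $g=\sum_{|j|\le L} d_j \tilde\phi_j(\cdot-j) \in V_{2,L}(\tilde\Phi)$, I would substitute the definitions \eqref{samplingsis.def} and \eqref{kernelphitildephi} into $\langle S_{\Phi,\tilde\Phi,\Gamma}f,g\rangle$ and collapse the two inner sums via $\sum_{i}b_{ji}(A_{\Phi,\tilde\Phi})_{i,j'}=\delta_{j,j'}$, which comes from the identity $A_{\Phi,\tilde\Phi}^{-1}A_{\Phi,\tilde\Phi}=I$. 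This should yield the clean formula
\begin{equation*}
\langle S_{\Phi,\tilde\Phi,\Gamma}f,g\rangle \;=\; \sum_{n=1}^N \tfrac{\gamma_{n+1}-\gamma_{n-1}}{2}\,f(\gamma_n)\,g(\gamma_n) \;=\; c^{T} A_{\Phi,\tilde\Phi,\Gamma}\, d,
\end{equation*}
so that $A_{\Phi,\tilde\Phi,\Gamma}$ is exactly the matrix of the bilinear form $(f,g)\mapsto \langle S_{\Phi,\tilde\Phi,\Gamma}f,g\rangle$ in the basis pair $\{\phi_i(\cdot-i)\}_{|i|\le L}$, $\{\tilde\phi_j(\cdot-j)\}_{|j|\le L}$.

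With this identity in hand the equivalences become routine finite-dimensional linear algebra. In the finite-dimensional spaces $V_{2,L}(\Phi)$ and $V_{2,L}(\tilde\Phi)$ all norms are equivalent, so the upper admissibility bound \eqref{ad.def2} holds automatically, while the lower bound \eqref{ad.def1} amounts to the existence of a constant $D_1>0$ with $\sup_{\|d\|\le 1}|c^{T}A_{\Phi,\tilde\Phi,\Gamma}d|\ge D_1\|c\|$ for every coefficient vector $c$; since $A_{\Phi,\tilde\Phi,\Gamma}$ is a square matrix of size $(2L+1)$, this is equivalent to its nonsingularity, establishing (i)$\Leftrightarrow$(ii). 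For (i)$\Leftrightarrow$(iii) I would test \eqref{v2phigalerkin.def} against the basis elements $g=\tilde\phi_{j_0}(\cdot-j_0)$, $|j_0|\le L$, converting the Galerkin equation into the linear system $A_{\Phi,\tilde\Phi,\Gamma}^{T}c=b(f)$ for the coefficient vector of $h$, with $b_j(f)=\langle S_{\Phi,\tilde\Phi,\Gamma}f,\tilde\phi_j(\cdot-j)\rangle$; unique solvability for every $f\in V_2(\Phi)$ (including $f=0$, which forces a trivial kernel) is then equivalent to nonsingularity of $A_{\Phi,\tilde\Phi,\Gamma}$. An entirely parallel computation with trial functions $f=\phi_{i_0}(\cdot-i_0)$ reduces (iv) to the system $A_{\Phi,\tilde\Phi,\Gamma}\,d=\tilde b(g)$ and hence to the same matrix condition.

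The main obstacle is the book-keeping in the first step, where the double sum defining $K_{\Phi,\tilde\Phi}$ must be threaded through $S_{\Phi,\tilde\Phi,\Gamma}$, paired against $g$, and the two matrix factors involving $b_{ji}$ and $(A_{\Phi,\tilde\Phi})_{i,j'}$ must collapse via $A_{\Phi,\tilde\Phi}^{-1}A_{\Phi,\tilde\Phi}=I$; one has to be careful that the $j$-index range is unrestricted while $j'$ ranges over $|j'|\le L$, so the collapsed sum $\sum_{|j'|\le L}d_{j'}\delta_{j,j'}$ correctly reassembles $g(\gamma_n)$. Once the sampled bilinear form $\sum_n w_n f(\gamma_n)g(\gamma_n)=c^{T}A_{\Phi,\tilde\Phi,\Gamma}d$ is extracted, all four equivalences follow from standard linear algebra on $\RR^{2L+1}$.
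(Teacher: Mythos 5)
Your proposal is correct and follows essentially the same route as the paper: both hinge on the identity $\langle S_{\Phi,\tilde\Phi,\Gamma}h,g\rangle=c^{T}A_{\Phi,\tilde\Phi,\Gamma}d$, obtained by collapsing the kernel $K_{\Phi,\tilde\Phi}$ against $\tilde\phi_j(\cdot-j)$ via $A_{\Phi,\tilde\Phi}^{-1}A_{\Phi,\tilde\Phi}=I$, after which all four statements reduce to nonsingularity of the $(2L+1)\times(2L+1)$ matrix by finite-dimensional linear algebra and the Riesz-basis (linear independence plus norm equivalence) property of the generators, which the paper also invokes. Your write-up merely makes explicit the linear-algebra bookkeeping that the paper leaves as ``this proves the desired equivalent statements.''
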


   \begin{proof} %(i)$\Longleftrightarrow$ (ii)\quad
For $h=\sum_{i=-L}^L c_i \phi_i(\cdot-i)\in V_{2, L}(\Phi)$ and
$g=\sum_{j=-L}^L d_j \tilde \phi_j(\cdot-j)\in V_{2, L}(\tilde \Phi)$, we obtain
\begin{eqnarray}\label{v2phigalerkin.tm.pf.eq1}
\langle S_{\Phi, \tilde \Phi, \Gamma} h, g\rangle & = &
\sum_{i,j=-L}^L  \Big(\sum_{n=1}^N \frac{\gamma_{n+1}-\gamma_{n-1}}{2} \phi_i(\gamma_n-i) \langle K_{\Phi, \tilde \Phi}(t, \gamma_n),
\tilde \phi_j(t-j)\rangle\Big) c_i d_j\nonumber\\
& = &
\sum_{i,j=-L}^L  \Big(\sum_{n=1}^N \frac{\gamma_{n+1}-\gamma_{n-1}}{2} \phi_i(\gamma_n-i)
\tilde \phi_j(\gamma_n-j)\Big) c_i d_j\nonumber\\
&= & c^T A_{\Phi, \tilde \Phi, \Gamma} d,
\end{eqnarray}
where $c=(c_i)_{-L\le i\le L}$ and $d=(d_j)_{-L\le j\le L}$.
By the invertibility assumption on $A_{\Phi, \tilde \Phi}$,
 $\{\phi_i(\cdot-i), -L\le i\le L\}$ and $\{\tilde \phi_i(\cdot-i), -L\le i\le L\}$
  are Riesz bases of $V_{2, L}(\Phi)$ and $V_{2, L}(\tilde \Phi)$ respectively.
  This together with \eqref{v2phigalerkin.tm.pf.eq1} proves the desired equivalent statements.
%
%  Therefore solving Galerkin equations \eqref{v2phigalerkin.def} reduces to
%finding $c_i, -L\le i\le L$, such that
%\begin{eqnarray}
%& & \sum_{i=-L}^L  \Big(\sum_{n=1}^N \frac{\gamma_{n+1}-\gamma_{n-1}}{2} \phi_i(\gamma_n-i) \langle K(t, \gamma_n),
%\tilde \phi_j(t-j)\rangle\Big) c_i\nonumber\\
%& = &
%\sum_{n=1}^N \frac{\gamma_{n+1}-\gamma_{n-1}}{2} f(\gamma_n)
%\langle K(t, \gamma_n), \tilde \phi_j(t-j)\rangle, \ -L\le j\le L.
%\end{eqnarray}
%This together with
%the observation
% $$\langle K(x, y), \tilde \phi_j(x-j)\rangle= \tilde \phi_j(y-j)\ \ {\rm for \ all} \  y\in \RR \ {\rm and} \ -L\le j\le L,
%$$
%proves.
\end{proof}

%In one-dimensional setting,
%for $\Gamma=\{\omega_n\}_{n=1}^N$ with
%$0 < \delta < 1$, $\omega_1<\omega_2<\cdots<\omega_N$
%and  $\delta=\max_{2\le n\le N-1} (\omega_{n+1}-\omega_{n-1})/2$,
%
%
%
%$\omega_0:=\omega_1, \omega_{N+1}:=\omega_N$ and
%  $\mu_n:=\frac{(\omega_{n+1}-\omega_{n-1})}2, 1\le n\le N$,
% and set has density $\delta$ where $\delta=\max_n (\omega_{n+1}-\omega_n)$. Our main result in this section is to show that, for an arbitrary fixed reconstruction space $U$, for suitable kernel $K$ and small $\delta$ ensures stable reconstruction.
%
%
\subsection{Oblique Projection and iterative approximation-projection algorithm}
In this subsection,  we first discuss  existence and uniqueness of oblique projection for the pair $(V_{2, L}(\Phi), V_{2,L}(\tilde \Phi))$.

\begin{Tm}  Let $L\ge 1$, and let $\Phi$ and $\tilde \Phi$ satisfy \eqref{wiener01.def}. Assume that
the correlation matrix $A_{\Phi, \tilde \Phi}$ has bounded inverse on $\ell^2$.
Then
the principal submatrix
\begin{equation}\label{aphitildeL.def}
A_{\Phi, \tilde\Phi, L}:=\big(\langle \phi_i(\cdot-i), \tilde \phi_j(\cdot-j)\rangle\big)_{-L\le i,j\le L}\end{equation}
of the correlation matrix $A_{\Phi, \tilde \Phi}$ is nonsingular if and only if
there exists a unique oblique projection for the pair $(V_{2, L}(\Phi), V_{2, L}(\tilde \Phi))$.
Moreover, the oblique projection could be defined by
\begin{equation}\label{projectionphitildephi.def}
P_{\Phi, \tilde \Phi, L}f:=\sum_{-L\le i,j\le L}  \langle f, \tilde \phi_i(\cdot-i)  \rangle \tilde b_{ij} \phi_j(\cdot-j), \ f\in V_2(\Phi),\end{equation}
where $(A_{\Phi, \tilde\Phi, L})^{-1}=(\tilde b_{ij})_{-L\le i,j\le L}$.
%is the inverse of $A_{\Phi, \tilde\Phi, L}$.
\end{Tm}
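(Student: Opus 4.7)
My plan is to recast the defining conditions of an oblique projection as a finite linear system in Riesz-basis coordinates, so that existence and uniqueness reduce to nonsingularity of a single $(2L+1)\times(2L+1)$ matrix. The invertibility hypothesis on $A_{\Phi,\tilde\Phi}$ together with Lemma \ref{wiener.lem} (as already used in Theorem \ref{frirks.tm} and in the proof of Theorem \ref{v2phiadmissible.tm}) guarantees that $\{\phi_j(\cdot-j):|j|\le L\}$ and $\{\tilde\phi_i(\cdot-i):|i|\le L\}$ are Riesz bases of $V_{2,L}(\Phi)$ and $V_{2,L}(\tilde\Phi)$ respectively, which is what makes the reduction possible.

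Writing a prospective image as $h=\sum_{j=-L}^{L}c_j\phi_j(\cdot-j)$ and testing \eqref{projection201} against each basis element $\tilde\phi_i(\cdot-i)$, the Galerkin part of the oblique projection becomes the finite system
\[
A_{\Phi,\tilde\Phi,L}^{T}c=b,\qquad b_i:=\langle f,\tilde\phi_i(\cdot-i)\rangle,\ -L\le i\le L.
\]
A key observation I will establish is that the map $f\mapsto b$ surjects $V_2(\Phi)$ onto $\CC^{2L+1}$: given any target vector, extending it to a finitely supported element of $\ell^2$ and inverting $A_{\Phi,\tilde\Phi}$ produces a sequence of coefficients defining some $f\in V_2(\Phi)$ with the prescribed values.

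If $A_{\Phi,\tilde\Phi,L}$ is nonsingular, I would define $P_{\Phi,\tilde\Phi,L}f$ by the explicit formula \eqref{projectionphitildephi.def}; this corresponds to taking $c=(A_{\Phi,\tilde\Phi,L}^{T})^{-1}b$, and direct substitution then verifies \eqref{projection201}. The reproducing identity \eqref{projection200} comes for free, since for $h\in V_{2,L}(\Phi)$ both $P_{\Phi,\tilde\Phi,L}h$ and $h$ itself satisfy the same nonsingular Galerkin system and must therefore coincide; uniqueness of the oblique projection follows by applying the same nonsingularity argument to the difference of two candidates. Boundedness is automatic, because $\sup_i\|\tilde\phi_i(\cdot-i)\|_2<\infty$ by \eqref{wiener01.def}.

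For the converse, I argue by contraposition. If $A_{\Phi,\tilde\Phi,L}$ were singular, then $A_{\Phi,\tilde\Phi,L}^{T}$ would not be surjective on $\CC^{2L+1}$, and the surjectivity of $f\mapsto b$ established above would produce an $f\in V_2(\Phi)$ for which the Galerkin system admits no solution $c$, contradicting the existence of $Pf\in V_{2,L}(\Phi)$. I expect this surjectivity step to be the main technical point: it is where the global invertibility of $A_{\Phi,\tilde\Phi}$ on all of $\ell^2$ enters essentially, beyond the mere nonsingularity of the finite principal submatrix that appears in the statement.
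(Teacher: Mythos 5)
Your proof is correct, and its sufficiency half (the explicit formula \eqref{projectionphitildephi.def}, the verification of \eqref{projection200} and \eqref{projection201} by reducing to the nonsingular system $A_{\Phi,\tilde\Phi,L}^{T}c=b$, and uniqueness of the projection from nonsingularity of that system) is exactly the content the paper dismisses as ``obvious.'' Where you genuinely diverge is the necessity direction. The paper argues by perturbation: assuming $A_{\Phi,\tilde\Phi,L}$ singular, it takes a nonzero $e\in N\big((A_{\Phi,\tilde\Phi,L})^{T}\big)$ and a nonzero functional ${\mathcal J}$ on $V_2(\Phi)$ annihilating $V_{2,L}(\Phi)$, and forms the rank-one operator $Qf={\mathcal J}(f)\sum_{i}e_i\phi_i(\cdot-i)$; since $Q$ vanishes on $V_{2,L}(\Phi)$ and is orthogonal to $V_{2,L}(\tilde\Phi)$, it can be added to any oblique projection to produce a second one, so uniqueness fails (and non-existence makes the claim vacuously true). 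You instead prove the stronger assertion that no oblique projection exists at all: the moment map $f\mapsto\big(\langle f,\tilde\phi_i(\cdot-i)\rangle\big)_{|i|\le L}$ is onto $\CC^{2L+1}$ --- this is precisely where the $\ell^2$-invertibility of the full correlation matrix $A_{\Phi,\tilde\Phi}$ enters, as you flag, via extending a target vector by zero and applying $(A_{\Phi,\tilde\Phi}^{T})^{-1}$ --- so a singular $A_{\Phi,\tilde\Phi,L}^{T}$ leaves some $f\in V_2(\Phi)$ whose Galerkin system has no solution in $V_{2,L}(\Phi)$. Both are valid negations of ``there exists a unique oblique projection''; your route buys the sharper conclusion that singularity forces non-existence rather than mere non-uniqueness, at the cost of the surjectivity lemma, while the paper's route needs only a Hahn--Banach functional and a null vector and never touches the infinite matrix at this step.
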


\begin{proof}
The sufficiency is obvious.
%($\Longrightarrow$) One may verify that the operator $P_{\Phi, \tilde \Phi, L}$ is a oblique projection for the pair $(V_{2, L}(\Phi), V_{2, L}(\tilde \Phi)$. Now we prove the uniqueness. Let $Q$ be another oblique projection. Then
%$\langle (Q-P_{\Phi, \tilde \Phi, L})f, g\rangle=0$ for all $f\in V_2(\Phi)$ and $g\in V_{2, L}(\tilde \Phi)$.
%Write
%$(Q-P_{\Phi, \tilde \Phi, L})f=\sum_{-L\le i\le L} c_l \phi_i(\cdot-i)$. Then the above equality implies that
%$$\sum_{-L\le i\le L} \langle \phi_i(\cdot-i), \tilde \phi_j(\cdot-j)\rangle c_i=0, \ -L\le j\le L.$$
%This together with the nonsingularity of $A_{\Phi, \tilde \Phi, L}$ implies that $c_i=0$ for all $-L\le i\le L$ and hence
%$$Qf=P_{\Phi, \tilde \Phi, L} f, \ f\in V_2(\Phi).$$
Now we prove the necessity. Suppose, to the contrary, that $A_{\Phi, \tilde \Phi, L}$ in \eqref{aphitildeL.def} is singular.
Take  a nonzero vector $e=(e_i)_{-L\le i\le L}$  in the null space $N((A_{\Phi, \tilde \Phi, L})^T)$
and  a nonzero linear functional ${\mathcal J}$  on $V_2(\Phi)$ such that ${\mathcal J}(h)=0$ for all $h\in V_{2, L}(\Phi)$. Define
$$Q(f):={\mathcal J}(f) \sum_{-L\le i\le L} e_i \phi_i(\cdot-i), \ f\in V_2(\Phi).$$
Then  $Q$ is a nonzero linear operator from $V_2(\Phi)$ to $V_{2, L}(\Phi)$, % and satisfies the following properties:
$$Qh=0,\quad  h\in V_{2, L}(\Phi)$$
 and
$$ \langle Qf, g\rangle= {\mathcal J}(f) \sum_{-L\le i,j\le L}  e_i \langle \phi_i(\cdot-i), \tilde \phi_j(\cdot-j)\rangle d_j=0,
$$
where $g=\sum_{-L\le j\le L} d_j \tilde \phi_j(\cdot-j)\in V_{2, L}(\tilde \Phi)$.
This contradicts to the uniqueness of oblique projections. % for the pair for the pair $(V_{2, L}(\Phi), V_{2, L}(\tilde \Phi))$.
\end{proof}

In this subsection, we then examine  exponential convergence of an iterative algorithm
for the recovery of  signals with finite rate of innovation.
 Replacing $P_{U, \tilde U}$ and $S_{\Gamma, \delta}$ in
the iterative algorithm \eqref{iaprks.def}
by $P_{\Phi, \tilde \Phi, L}$ and $S_{\Phi, \tilde \Phi, \Gamma}$ respectively,
it becomes
\begin{equation} \label{apalgorithm.thm3.eq2bb}
g_{m+1}=g_m-\sum \limits_{n=1}^N \sum \limits_{ i,j=-L}^{L}
\frac{\gamma_{n+1}-\gamma_{n-1}}{2} g_m(\gamma_n) \tilde \phi_i(\gamma_n-i) \tilde b_{ij} \phi_j(\cdot-j)+g_0, \ m\ge 0,\end{equation}
with $g_0\in V_{2, L}(\Phi)$.
%$$g_0=\sum\limits_{i, j=-L}^L \langle f_0, \tilde \phi_i(\cdot-i)\rangle \tilde b_{ij} \phi_j(\cdot-j).$$

\begin{Tm}\label{v2phialgorithm.thm}
 Let $\Phi$ and $\tilde \Phi$ satisfy \eqref{wiener01.def}.
%, and let   $A_{\Phi, \tilde \Phi, \Gamma}$ and
%$A_{\Phi, \tilde\Phi, L}, L\ge 1$, be as in \eqref{agammaphi.def} and \eqref{aphitildeL.def} respectively.
 Assume that
$A_{\Phi, \tilde\Phi, L}$ is nonsingular.
If
\begin{equation}\label{v2phialgorithm.thm.eq1}
\|A_{\Phi, \tilde \Phi, \Gamma}  (A_{\Phi, \tilde \Phi, L})^{-1}-I\|<1,\end{equation}
then the iterative algorithm
\eqref{apalgorithm.thm3.eq2bb}
has exponential convergence. Moreover, it recovers the original signal
$h\in V_{2, L}(\Phi)$  when
$$g_0=\sum \limits_{n=1}^N \sum \limits_{ i,j=-L}^{L}
\frac{\gamma_{n+1}-\gamma_{n-1}}{2} h(\gamma_n) \tilde \phi_i(\gamma_n-i) \tilde b_{ij} \phi_j(\cdot-j).
$$
\end{Tm}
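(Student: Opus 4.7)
The plan is to reduce the iteration \eqref{apalgorithm.thm3.eq2bb} to a matrix fixed-point equation on $\RR^{2L+1}$ and invoke Banach's contraction principle. Since $A_{\Phi,\tilde\Phi,L}$ is nonsingular, the family $\{\phi_i(\cdot-i)\}_{-L\le i\le L}$ is linearly independent and hence a basis of the finite-dimensional space $V_{2,L}(\Phi)$. Every $g_m = \sum_{j=-L}^L c_{m,j}\phi_j(\cdot-j)$ can therefore be identified with a unique vector $c_m\in\RR^{2L+1}$, and because all norms on $V_{2,L}(\Phi)$ are equivalent, convergence of the coefficient sequence will immediately give convergence in $V_{2,L}(\Phi)$.

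First I would compute the matrix representation of $P_{\Phi,\tilde\Phi,L} S_{\Phi,\tilde\Phi,\Gamma}$ on $V_{2,L}(\Phi)$. Starting from $g_m = \sum_k c_{m,k}\phi_k(\cdot-k)$, definition \eqref{samplingsis.def} gives $S_{\Phi,\tilde\Phi,\Gamma} g_m(x) = \sum_n \frac{\gamma_{n+1}-\gamma_{n-1}}{2}\, g_m(\gamma_n)\, K_{\Phi,\tilde\Phi}(x,\gamma_n)$. Pairing with $\tilde\phi_i(\cdot-i)$ and using the reproducing identity $\langle K_{\Phi,\tilde\Phi}(\cdot,\gamma_n),\tilde\phi_i(\cdot-i)\rangle = \tilde\phi_i(\gamma_n-i)$, which holds because Theorem \ref{frirks.tm} identifies $V_2(\tilde\Phi)$ with the reproducing kernel space $V_{K_{\Phi,\tilde\Phi}^*,2}$, one arrives at $\langle S_{\Phi,\tilde\Phi,\Gamma} g_m,\tilde\phi_i(\cdot-i)\rangle = ((A_{\Phi,\tilde\Phi,\Gamma})^T c_m)_i$. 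Substituting into formula \eqref{projectionphitildephi.def} for $P_{\Phi,\tilde\Phi,L}$ and reading off the coefficient of each $\phi_j(\cdot-j)$ shows that $P_{\Phi,\tilde\Phi,L} S_{\Phi,\tilde\Phi,\Gamma}$ is represented on coefficients by the matrix $M := (A_{\Phi,\tilde\Phi,\Gamma}(A_{\Phi,\tilde\Phi,L})^{-1})^T$.

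The iteration \eqref{apalgorithm.thm3.eq2bb} therefore becomes the linear recurrence $c_{m+1} = (I-M)c_m + c_0$ on $\RR^{2L+1}$. Because the operator norm is invariant under transposition, hypothesis \eqref{v2phialgorithm.thm.eq1} reads exactly $\|I-M\| < 1$, so $I-M$ is a strict contraction; a Neumann-series argument then yields $c_m \to c_\infty = M^{-1}c_0$ with the geometric bound $\|c_m - c_\infty\|\le \|I-M\|^{m+1}(1-\|I-M\|)^{-1}\|c_0\|$, which transfers to exponential convergence in $V_{2,L}(\Phi)$ by the norm equivalence above.

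For the recovery claim, the prescribed initial datum $g_0$ is exactly $P_{\Phi,\tilde\Phi,L} S_{\Phi,\tilde\Phi,\Gamma} h$, so in coordinates $c_0 = M c_h$, where $c_h$ is the coefficient vector of $h\in V_{2,L}(\Phi)$; invertibility of $M$, guaranteed by $\|I-M\|<1$, then forces $c_\infty = M^{-1}c_0 = c_h$, i.e.\ $g_\infty = h$. I expect the main obstacle to be the bookkeeping in the second step: tracking the transposes relating $A_{\Phi,\tilde\Phi,\Gamma}$ and $A_{\Phi,\tilde\Phi,L}$ carefully enough that the contraction estimate aligns precisely with hypothesis \eqref{v2phialgorithm.thm.eq1}. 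Everything else is the finite-dimensional shadow of the argument already used in Lemma \ref{galerkinrks.lem}.
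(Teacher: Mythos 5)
Your proposal is correct and follows essentially the same route as the paper: the paper's proof simply passes to coefficient vectors and rewrites the iteration as $c_{m+1}^T=c_m^T\bigl(I-A_{\Phi,\tilde\Phi,\Gamma}(A_{\Phi,\tilde\Phi,L})^{-1}\bigr)+c_0^T$, then invokes \eqref{v2phialgorithm.thm.eq1}. You supply the details the paper leaves implicit (the reproducing identity giving the matrix representation, the transpose bookkeeping, the fixed-point and recovery arguments), all of which check out.
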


\begin{proof}
Write $g_m=\sum_{-L\le i\le L} c_m(i)\phi_i(\cdot-i)$ and set
$c_m=(c_m(i))_{-L\le i\le L}$.
Then we can reformulate the iterative algorithm \eqref{apalgorithm.thm3.eq2bb}
as
\begin{equation*}\label{v2phialgorithm.thm.pf.eq1}
c_{m+1}^T= c_m^T- c_m^T A_{\Phi, \tilde \Phi, \Gamma}  (A_{\Phi, \tilde \Phi, L})^{-1}+c_0^T, \ m\ge 0.
\end{equation*}
This together with \eqref{v2phialgorithm.thm.eq1} proves the desired conclusions.
%exponential convergence and perfect reconstruction the exponential convergence follows from the above equation and the %assumption.,
\end{proof}

%corollary for biorthogonal case.

\section{Numerical Simulation}
\label{simulation.section}

In this section, we present several  examples to illustrate our  Galerkin reconstruction
of signals with finite rate of innovation.

\smallskip

Let $\Theta:=\{\theta_i\}$ be either
$\Theta_O:=\{0\}$ (the identical zero set),  or $\Theta_I$ with $\theta_i$ being randomly selected in $[-0.2, 0.2]$.
Set
$$\Phi_0=\{\phi_0(\cdot-\theta_i)\}_{i\in \ZZ},$$
 where
 the generating function $\phi_0$ is either
(i)  the sinc function
${\rm sinc}(t):=\frac{\sin\pi t}{\pi t}$, or
(ii) the Gaussian function ${\rm gauss}(t):=\exp(-3t^2/2)$, or
(iii)  the cubic $B$-spline  ${\rm spline}(t)$, %or (iv) the Cohen-Daubechies-Feauveau 9/7 scaling function,
%$$h=[0.026748757411 -0.016864118443 -0.078223266529 0.266864118443 0.602949018236 0.266864118443 -0.078223266529 -0.0168641184434 0.026748757411].$$
see Figure \ref{originalsinc.figure} for examples of signals in $V_2(\Phi_0)$.
%and $\phi$ being
%\begin{itemize}
%\item $\phi$ being the sinc function
%and $\tilde \phi=\phi$ (notation: sinc)
%\item $\phi=\tilde\phi$ being the Gaussian function $\exp(-x^2)$ (notation: Gauss)
%\item $\phi=\tilde \phi$ being the quadratic spline $B_3$ of order $3$ (notation: spline) $\phi(x)=x^2/2$ if $0\le x\le 1$, $\phi(x)=-(x+1/2)^2+3/4$ if $1\le x\le 2$  and $\phi(x)=(x-3)^2/2$ if $2\le x\le 3$.
%\item cubic spline $\phi=(t+2)^3/6$ for $-2\le t\le -1$, $\phi(t)=2/3-1/2*t^2*(2+t)$
%for $-1\le t\le 0$, $\phi(t)=2/3-1/2*t^2 (2-t)$ for $0\le t\le 1$ and $1/6 (2-t)^3$ for $1\le t\le 2$.
%
%\item $\phi, \tilde \phi$ being  scaling functions in CDF 97 (notation CDF97)
%\end{itemize}
%%%%%%%%%%%%%%%%%%%%%%%%%%%%%
\begin{figure}[hbt]
\centering
\begin{tabular}{cc}
 \includegraphics[width=60mm]{./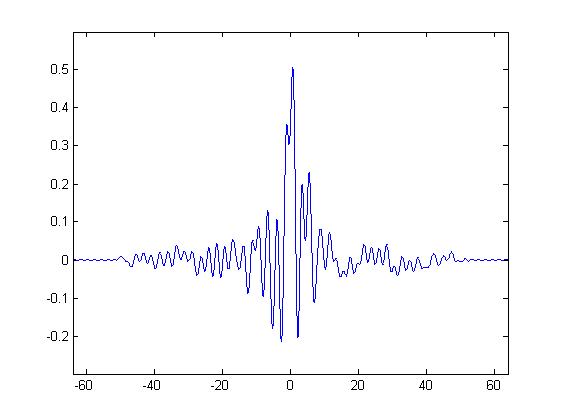}&
  \includegraphics[width=60mm]{./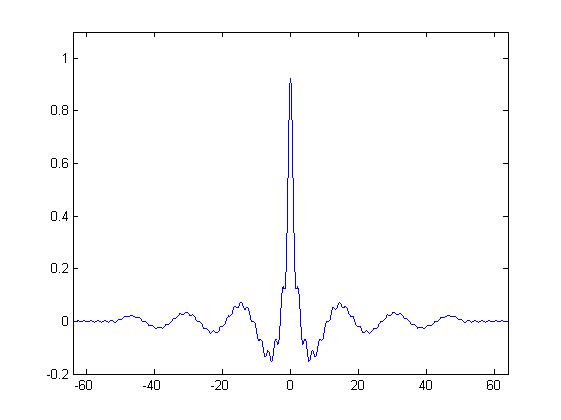}
\\
%  {\small (a)} & {\small (b)}\\
%   \includegraphics[width=35mm]{./gaussiansampler.eps}\\
%  {\small (b)}\\
 \includegraphics[width=60mm]{./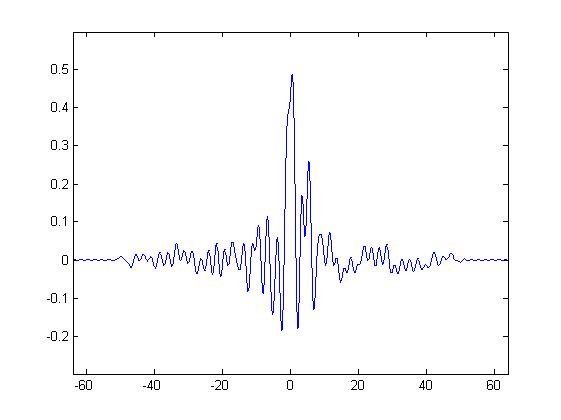}
 &
 \includegraphics[width=60mm]{./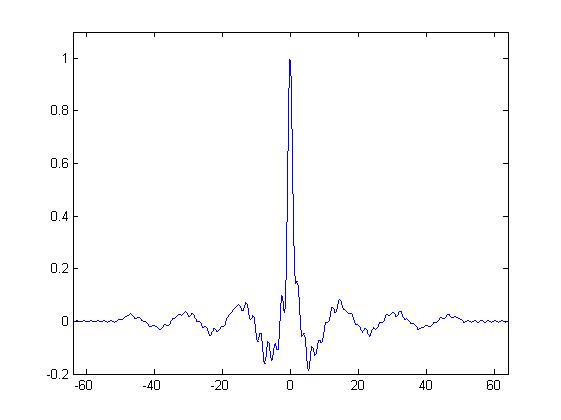}
  \\
  \end{tabular}
\caption{\small Above are bandlimited signals $x({\rm sinc},  0)=\sum_i \alpha_i {\rm sinc} (t-i)$ with
    $(1+|i|)\alpha_i\in [-1, 1]$  randomly selected (left), and
$x({\rm sinc}, 1)=\sum_i \beta_i {\rm sinc} (t-i)$  with  $\beta_i=(1+|i|)^{-1} \cos(\pi i/8)$ (right).
Below are signals
$x({\rm sinc}, 2)=\sum_i \alpha_i {\rm sinc} (t-i-\theta_i)$  (left) and
 $x({\rm sinc}, 3)=\sum_i \beta_i {\rm sinc} (t-i-\theta_i)$
with $\theta_i\in [-0.2, 0.2]$ randomly selected (right).
 } \label{originalsinc.figure}
\end{figure}
%%%%%%%%%%%%%%%%%%%%%%%%%%%%
In our numerical simulations, reconstructed signals   live  in the space %of the form
$$V_{2,L}(\Phi_0)=\Big\{\sum_{i=-L}^L c_i \phi_0(t-i-\theta_i): \sum_{i=-L}^L |c_i|^2<\infty\Big\}, \ L \ge 1,$$
 and
%To recovery signals in $V_{2, L}(\Phi)$, we focus on the following two sampling schemes:
sampling schemes are

 \begin{itemize}
\item
Nonuniform sampling on $\Gamma_N:=\{\gamma_k, |k|\le L+2\}$,
 where $\gamma_{-L-3}=-L-2$ and
  $\gamma_{k}-\gamma_{k-1}\in [0.9, 1.1], |k|\le L+2$, are randomly selected.

   \item Jittered sampling
on
$\Gamma_J:=\{\gamma_k:=k+\delta_k, |k|\le L+2\}$,
where $\delta_k\in [-0.1, 0.1]$  are randomly selected.

\item Adaptive sampling on $\Gamma_C:=\{\gamma_k\in [-L-2, L+2]\}$ of a bounded signal $x\in V_2(\Phi)$  via crossing time encoding machine (C-TEM), where
 $x(t)\ne \|x\|_\infty \sin(\pi t)$ for all $t\in [-L-2, L+2]$ except $t=\gamma_k$ for some $k$, see Figure \ref{ctem.fig}  \cite{feichtinger2012, vetterli2014, Lazar2004}.
 \end{itemize}

 \begin{figure}[hbt]
\centering
\begin{tabular}{l}
 \includegraphics[width=128mm, height=35mm]{./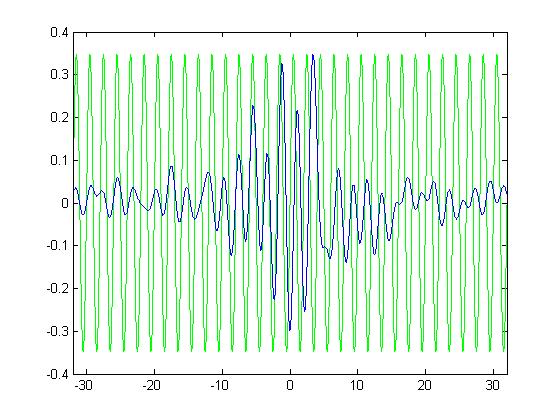}\\
 %&
 \includegraphics[width=128mm, height=30mm]{./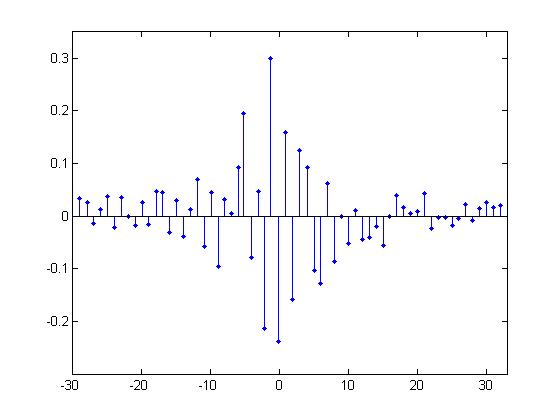}
  \end{tabular}
\caption{\small Above is the signal $x({\rm sinc}, 0)$ in Figure \ref{originalsinc.figure}
and the crossing signal $\|x({\rm sinc},0)\|_\infty \sin \pi t$ on $[-L-2, L+2]$,  and below
is  the sampling data of $x({\rm sinc}, 0)$ on  the sampling set $\Gamma_C\subset [-L-2, L+2]$, where $L=30$.
 } \label{ctem.fig}
\end{figure}

 To reconstruct signals via Galerkin method, we take
 $$\tilde \Phi_0=\{\tilde \phi_0\}\quad {\rm  with} \quad \tilde \phi_0=\chi_{[-1/2, 1/2)}.$$
  %or the dual CDF 9/7 scaling function.
Then  the equation \eqref{v2phigalerkin.def} to determine
the  Galerkin reconstruction
$$G_{\Phi_0, \tilde \Phi_0, \Gamma}f:=\sum_{i=-L}^L c_i \phi_0(\cdot-i-\theta_i)\in V_{2, L}(\Phi_0)$$
  can be reformulated as
follows:
\begin{eqnarray}\label{v2phi0galerkin}
& & \sum_{i=-L}^L
\Big(\sum_{n=1}^N \frac{\gamma_{n+1}-\gamma_{n-1} }{2}
\phi_0(\gamma_n-i-\theta_i)\tilde \phi_0(\gamma_n-j)\Big) c_i\nonumber\\
& = & \sum_{n=1}^N \frac{\gamma_{n+1}-\gamma_{n-1} }{2} f(\gamma_n) \tilde \phi_0(\gamma_n-j),
-L\le j\le L,
\end{eqnarray}
where  $f\in V_2(\Phi_0)$
and $\Gamma:=\{\gamma_n\}_{n=1}^N$ is either the nonuniform sampling set $\Gamma_N$, or
the jittered sampling set $\Gamma_J$, or the adaptive C-TEM sampling set $\Gamma_C$.
Considering the bandlimited signal $x({\rm sinc}, 0)$
  described in Figure \ref{originalsinc.figure}, we present some numerical results for
  its pre-reconstruction in $V_2(\Phi_0)$ and  Galerkin reconstruction  in $V_{2, L}(\Phi_0)$  in Figure \ref{prereconstruction.fig}. We see that a pre-reconstruction may provide a reasonable
approximation, while
a Galerkin reconstruction could recover the original signal
almost perfectly in the sampling interval. %, except certain boundary effect.
%
%, cf. \eqref{boundaryeffect.eq}
%  and Theorem \ref{nsqs.thm} for theoretical estimates.
%
%  the difference between a signal in $V_2(\Phi_0)$ and
%%%%%%%%%%%%%%%%%%%%%%%%%%%%%
\begin{figure}[hbt]
\centering
\begin{tabular}{cc}
 \includegraphics[width=62mm, height=38mm]{./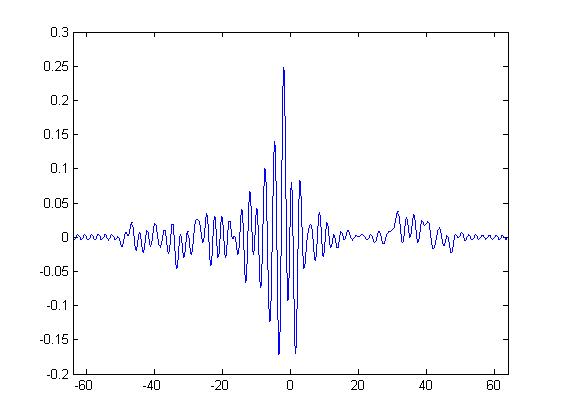}&
   \includegraphics[width=62mm, height=38mm]{./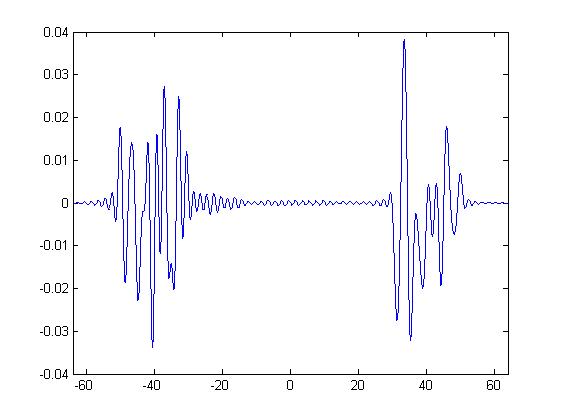}
   \\
  \includegraphics[width=62mm,  height=38mm]{./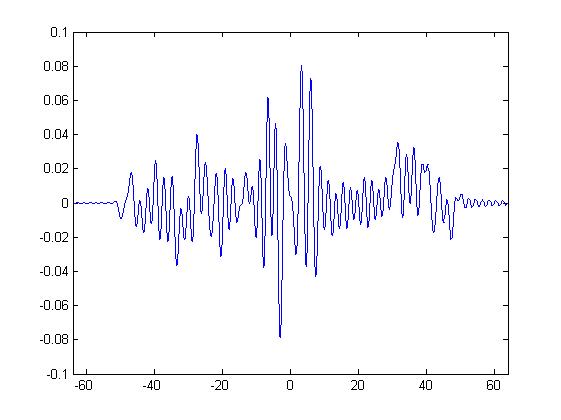}
 &
  \includegraphics[width=62mm, height=38mm]{./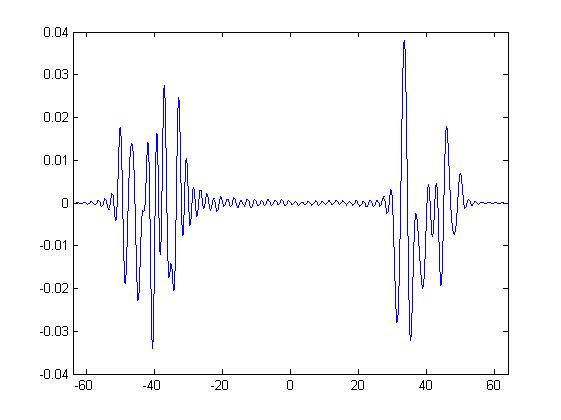}\\
       \includegraphics[width=62mm, height=38mm]{./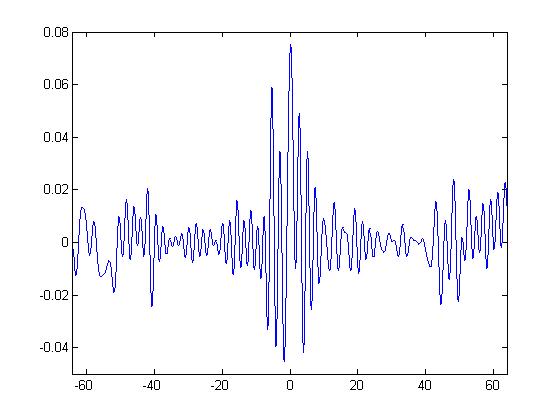}
 &
  \includegraphics[width=62mm, height=38mm]{./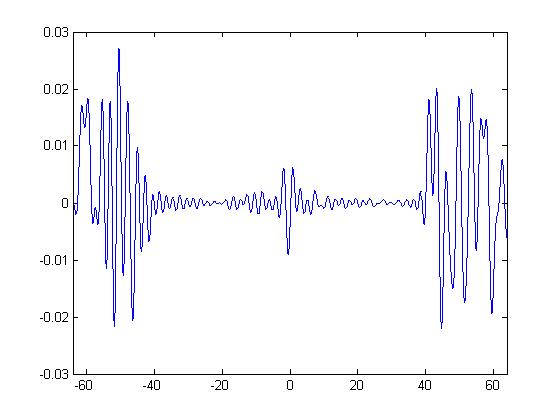}
  \end{tabular}
\caption{\small On the top left is the difference between the  signal $x({\rm sinc}, 0)$ in Figure \ref{originalsinc.figure}
and its pre-reconstructed signal
$S_{ \Phi_0, \tilde \Phi_0,\Gamma_N} x({\rm sinc},  0)$,
while on the top right is the difference between  $x({\rm sinc}, 0)$ and its
Galerkin reconstruction  $G_{ \Phi_0, \tilde \Phi_0,\Gamma_N} x({\rm sinc},  0)$.
The middle are differences $x({\rm sinc}, 0)-S_{ \Phi_0, \tilde \Phi_0, \Gamma_J} x({\rm sinc},  0)$ (left)
and $x({\rm sinc}, 0)-G_{ \Phi_0, \tilde \Phi_0, \Gamma_J}x({\rm sinc}, 0)$ (right)
associated with jittered sampling.
%The squares  errors of the differences are  0.3997 and 0.1985 respectively.
Listed below are differences
$x({\rm sinc}, 0)-S_{ \Phi_0, \tilde \Phi_0, \Gamma_C} x({\rm sinc},  0)$ (left)
and $x({\rm sinc}, 0)-G_{ \Phi_0, \tilde \Phi_0, \Gamma_C}x({\rm sinc}, 0)$ (right)
associated with adaptive C-TEM sampling.
 } \label{prereconstruction.fig}
\end{figure}
%%%%%%%%%%%%%%%%%%%%%%%%%%%%

\smallskip

For $\Phi_0=\{\phi_0(\cdot-\theta_i)\}$,
let signals $x(\phi_0, l)\in V_2(\Phi_0),  0\le l\le 3$, be as
$x({\rm sinc}, l)$
in Figure \ref{originalsinc.figure} with the sinc function replaced by the function $\phi_0$.
In Figure \ref{galerkinreconstruction.fig}, we illustrate
their best approximation  in $V_{2,L}(\Phi_0)$
and  solutions of the Galerkin system \eqref{v2phi0galerkin}
with $f$ replaced by $x(\phi_0, l), 0\le l\le 3$, respectively.
We observe that
given a signal in $V_2(\Phi_0)$, its Galerkin reconstruction in $V_{2, L}(\Phi_0)$
could almost match its best approximation
in $V_{2, L}(\Phi_0)$, except near the boundary of the sampling interval.
The boundary effect is viewable especially when %the generating function
 $\phi_0$
has slow decay at infinity.
%, cf.
%\eqref{boundaryeffect.eq}  for  pre-reconstruction operators.
%%%%%%%%%%%%%%%%%%%%%%%%%%%%%
\begin{figure}[hbt]
\centering
\begin{tabular}{cc}
 \includegraphics[width=62mm]{./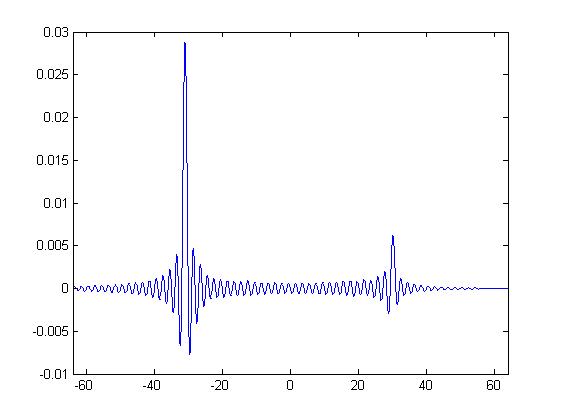}&
  \includegraphics[width=62mm]{./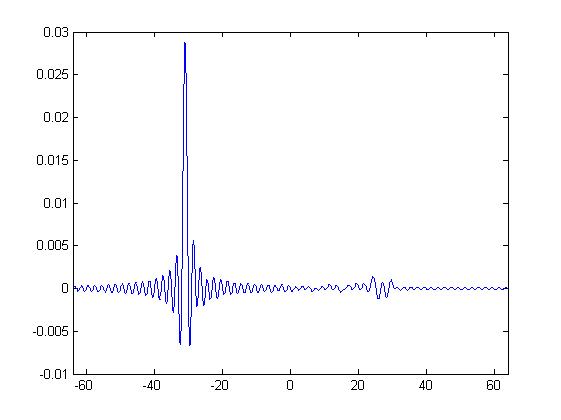}
  \\
   \includegraphics[width=62mm]{./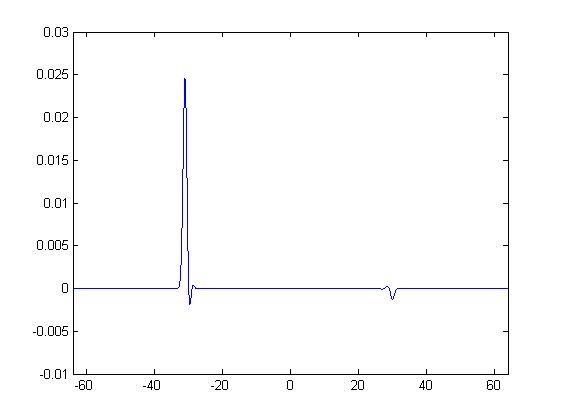}&
   \includegraphics[width=62mm]{./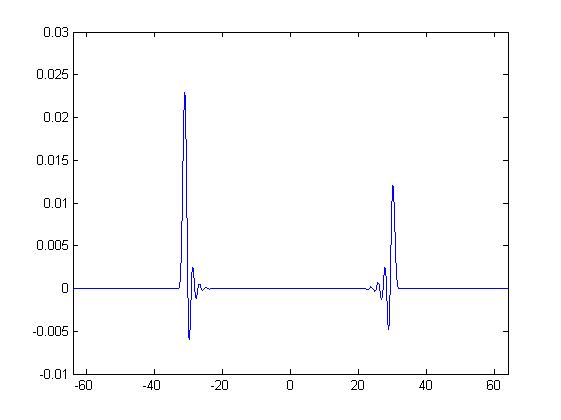}
\\
  \end{tabular}
\caption{\small Listed  are
 differences between best approximations of
 signals $x(\phi_0, 0)$ in $V_{2, 30}(\Phi_0)$ and
 their Galerkin reconstructions associated with
 operators $S_{\Phi_0, \tilde \Phi_0, \Gamma}$,
  where on the above, $\phi_0={\rm sinc}$,
  $\Gamma=\Gamma_N$(left)  and $\Gamma=\Gamma_J$ (right), while on the bottom
  $\Gamma=\Gamma_N$,
  $\phi_0={\rm gauss}$ (left)  and $\phi_0={\rm spline}$ (right).
%
%Above are differences between
% best approximation of the signal $x(\phi_0, 0)$ in $V_{2, 60}(\Phi_0)$
% and its Galerkin reconstructions in $V_{2, 30}(\Phi_0)$
%  associated with operators $S_{\Gamma_N, \Phi_0, \tilde \Phi_0}$
%and $S_{\Gamma_J, \Phi_0, \tilde \Phi_0}$.
%% The squares errors are      0.0297  and      0.0289
%Below are the differences between
%best approximation of the signal $x({\rm sinc}, 0)$ in $V_{2, 60}(\Phi_0)$
%
%Galerkin reconstructions of signal $x(Gauss, 0)$ associated with operators $S_{\Gamma_N, \Phi_0, \tilde \Phi_0}$
%and its and
%the differences between
%Galerkin reconstructions of signal $x(spline, 0)$ associated with operators $S_{\Gamma_N, \Phi_0, \tilde \Phi_0}$
%and its best approximation
%The sqaures   errors of the differences are      0.0240 and    0.0240 respectively.
%Third line for spline, errors are  0.0254 and 0.0264
 } \label{galerkinreconstruction.fig}
\end{figure}
%%%%%%%%%%%%%%%%%%%%%%%%%%%%

\smallskip

Given signals $x(\phi_0, l), 0\le l\le 3$,
let  $y_L(\phi_0, l)$
 be their best approximators  in $V_{2,L}(\Phi_0)$,
 and
 denote by
 $$e(\phi_0, l)=\|x(\phi_0, l)-y_L(\phi_0, l)\|$$
 their best approximation error
 in $V_{2, L}(\Phi_0)$.
For $\Gamma=\Gamma_N$ or $\Gamma_J$ or $\Gamma_C$,
  set
$$\epsilon_\Gamma(\phi_0, l)=\|z_L(\Gamma, \phi_0, l)-y_L(\phi_0, l)\|,$$
 where $z_L(\Gamma, \phi_0, l)$
 is obtained from solving Galerkin system
 \eqref{v2phi0galerkin} with $f$ replaced by $x(\phi_0, l)$.
For signals $x(\phi_0, l), 0\le l\le 3$,
 and sampling sets $\Gamma=\Gamma_N, \Gamma_J$ and $\Gamma_C$, Galerkin reconstruction
 \eqref{v2phi0galerkin}
 provides quasi-optimal approximation in  $V_{2, L}(\Phi_0)$,
and
the  quasi-optimal constant
in  Theorem \ref{nsqs.thm}
 is well behaved,
 %sampling sets $\Gamma=
%  quasi-optimal ratios
%  $$\frac{\|z_L(\Gamma, \phi_0, l)-x(\phi_0, l)\|}{\|y_L(\phi_0, l)-x(\phi_0, l)\|}$$
%of Galerkin reconstructions associated with sampling sets $\Gamma_N,\Gamma_J, \Gamma_C$  are dominated by
%  $1+ \epsilon_N(\phi_0, l)/e(\phi_0, l)$ and
%  $1+ \epsilon_J(\phi_0, l)/e(\phi_0, l)\le 3/2$  respectively,
%
%
%  are dominated by $3/2$,
  $$\frac{\|z_L(\Gamma, \phi_0, l)-x(\phi_0, l)\|}{\|y_L(\phi_0, l)-x(\phi_0, l)\|}%\quad {\rm and}
%  \quad
%  \frac{\|z_L(\Gamma_J, \phi_0, l)-x(\phi_0, l)\|}{\|y_L(\phi_0, l)-x(\phi_0, l)\|}
 \le 1+\frac{ \epsilon_\Gamma(\phi_0, l)}{e(\phi_0, l)}\le \frac{3}{2},$$
 see
 Table  \ref{galerkin.tab} for numerical results with abbrievated notations.
  \begin{table}[h]
\caption{Quasi-optimality of Galerkin reconstructions for bandlimited/Gauss/spline signals
}
\begin{tabular}{ccccccc}
\hline\hline
L %Iteration number
 &  10 %Amplitude error
& 15 %Innovation position error
 &  20 % Sampling error
 & 25 & 30
 \\
\hline
$ e({\rm sinc}, 0) $ &
 0.2176  &  0.1711  &   0.1388  &   0.1166 &   0.1024
   \\
$\epsilon_N({\rm sinc}, 0)$ &
0.0795  &  0.0668  &  0.0197  &  0.0201  &  0.0294\\
$\epsilon_J({\rm sinc}, 0)$ &
0.0770  & 0.0668  &  0.0201  & 0.0214  & 0.0290\\
$\epsilon_C({\rm sinc}, 0)$ &
0.0789 &  0.0715  &  0.0239  &  0.0263 &  0.0325\\
\hline
$ e({\rm sinc}, 1) $ &
0.2600 &  0.2124 & 0.1816 & 0.1457  & 0.1303\\
$ \epsilon_N({\rm sinc}, 1) $ &
0.0344 & 0.0809 &  0.0370  & 0.0294  &  0.0431\\
$ \epsilon_J({\rm sinc}, 1) $ &
0.0353  & 0.0806  & 0.0372  & 0.0301  & 0.0433\\
$ \epsilon_C({\rm sinc}, 1) $ &
0.0363  & 0.0831  & 0.0379  & 0.0319  & 0.0442\\
\hline
$ e({\rm sinc}, 2) $ &
 0.2095  &  0.1703 & 0.1365 &  0.1167 &  0.1007
     \\
$ \epsilon_N({\rm sinc}, 2) $ &
0.0619 &  0.0618  & 0.0256 & 0.0163  & 0.0281\\
$ \epsilon_J({\rm sinc}, 2) $ &
0.0596  &  0.0618  & 0.0260  &  0.0177   & 0.0275\\
$ \epsilon_C({\rm sinc}, 2) $ &
0.0608  &  0.0664  & 0.0284  & 0.0226   & 0.0308\\
\hline
$ e({\rm sinc}, 3) $ &
0.2655 & 0.2180 & 0.1863  & 0.1477  & 0.1322
 \\
$ \epsilon_N({\rm sinc}, 3) $ &
 0.0461  &  0.0810  &  0.0374  & 0.0258  & 0.0406\\
$ \epsilon_J({\rm sinc}, 3) $ &
0.0446 & 0.0809 & 0.0375  & 0.0265  & 0.0401\\
$ \epsilon_C({\rm sinc}, 3) $ &
0.0474 & 0.0837 & 0.0392  & 0.0298  & 0.0418\\
\hline
$ e({\rm gauss}, 0) $ &
 0.2055 & 0.1682  & 0.1398  & 0.1250 & 0.1086
   \\
$\epsilon_N({\rm gauss}, 0)$ &
0.0437 & 0.0515  & 0.0270 & 0.0158 & 0.0093
\\
$\epsilon_J({\rm gauss}, 0)$ &
0.0439  &  0.0523 & 0.0259  & 0.0160  &  0.0096\\
$\epsilon_C({\rm gauss}, 0)$ &
0.0433  & 0.0527 & 0.0270  & 0.0181  & 0.0108\\
\hline
$ e({\rm spline}, 0) $ &
0.1482 &  0.1325  &  0.1110   & 0.0924   & 0.0664   \\
$\epsilon_N({\rm spline}, 0)$ &
 0.0405  &  0.0298  &  0.0204 & 0.0266  & 0.0176
\\
$\epsilon_J({\rm spline}, 0)$ &
 0.0403  & 0.0299  &  0.0204 &  0.0281  &  0.0184\\
 $\epsilon_C({\rm spline}, 0)$ &
 0.0407 & 0.0292  &  0.0209 &  0.0279  & 0.0181\\
\hline \hline
\end{tabular}
\label{galerkin.tab}
\end{table}

\smallskip

Numerical stability of Galerkin reconstruction \eqref{v2phi0galerkin}
could be reflected by the condition number
${\rm cond}_{\Gamma, \Theta}(\phi_0)$ of the square matrix
$$A_{\Phi_0, \tilde \Phi_0, \Gamma}=\Big(\sum_{n=1}^N \frac{\gamma_{n+1}-\gamma_{n-1} }{2}
\phi_0(\gamma_n-i-\theta_i)\tilde \phi_0(\gamma_n-j)\Big)_{-L\le i, j\le L}. $$
 Some numerical results of
 condition numbers ${\rm cond}_{\Gamma, \Theta}(\phi_0)$
  with $\Gamma=\Gamma_N$ or $\Gamma_J$, and
$\Theta=\Theta_O$ or $\Theta_I$, are presented in  Table  \ref{galerkinstability.tab} with abbreviated notations.
%%%%%%%%%%%%%%%%%%%%%%%%%%%%%%%%
\begin{table}
\caption{Stability of Galerkin reconstructions for nonuniform/jittered sampling
}
\begin{tabular}{ccccccc}
\hline\hline
L %Iteration number
 &  10 %Amplitude error
& 15 %Innovation position error
 &  20 % Sampling error
 & 25 & 30
 \\
\hline
${\rm cond}_{N,O}({\rm sinc})$ &
    1.2059 &  1.2367 &  1.3458 &  1.4273 &  1.2904
    \\
${\rm cond}_{N,I}({\rm sinc})$ &
 1.9190 &  1.8946  &  1.9828 &  2.0635  & 2.0421
\\
\hline
${\rm cond}_{N, O}({\rm gauss})$ &
3.0162 &  2.7000 & 2.7908 & 3.3314 &  2.8362\\
${\rm cond}_{N,I}({\rm gauss})$ &
  3.2850 & 3.1447 & 3.1421 & 4.0283 &  3.4391\\
    \hline
${\rm cond}_{N,O}({\rm spline})$ &
3.7677 &  3.7534 &  3.0534 &  3.1400  &  4.1708\\
${\rm cond}_{N,I}({\rm spline})$ &
  4.4768  & 5.2417 &  3.3507 &  3.5354  &  5.0292\\
    \hline\hline
${\rm cond}_{J,O}({\rm sinc})$ &
    1.3737 &  1.4164  &  1.4105 &  1.4149 &  1.3763\\
${\rm cond}_{J,I}({\rm sinc})$ &
   1.9723 &  1.9351 &  2.3328 & 2.2037 &  2.1744\\
    \hline
${\rm cond}_{J,O}({\rm gauss})$ &
2.7066  & 2.7074  &  2.6936 & 2.6957 & 2.7190\\
${\rm cond}_{J,I}({\rm gauss})$ &
3.0847  &  3.1591 & 3.0696 & 3.0197  & 3.0878
\\ \hline
${\rm cond}_{J,O}({\rm spline})$ &
 3.1052 &  3.2109 &  3.2218 &  3.3257  &  3.2331
    \\
${\rm cond}_{J,I}({\rm spline})$ &
3.5570 &   3.7388  &  3.7140 &  3.9172  & 4.1830
\\
\hline \hline
\end{tabular}
\label{galerkinstability.tab}
\end{table}
%%%%%%%%%%%%%%%%%%%%%%%%%%%%%%
% This indicates that %our Galerkin  reconstructions are numerically stable, c.f. Theorem \ref{nsqs.thm},
% and that
 For the robust (sub-)Galerkin reconstruction,
 the generating function $\tilde \phi_0$ of the test space $V_{2, L}(\tilde \Phi_0)$
 should be so chosen that the corresponding  matrice $A_{\Phi_0, \tilde \Phi_0, \Gamma}$ is well-conditioned,
  c.f. Theorem \ref{nsqs.thm}.

\medskip
We conclude this sections with two more remarks.

\begin{re} {\rm The iterative approximation-projection algorithm \eqref{apalgorithm.thm3.eq2bb}
 could have better performance on solving
 Galerkin equations \eqref{v2phi0galerkin}, especially while matrices $A_{\Phi_0, \tilde \Phi_0, \Gamma}$
 have large condition number, which is the case when the sampling set $\Gamma$ and/or the shifting set
 $\Theta$ are not chosen appropriately.
 }\end{re}

 \begin{re} {\rm For the admissibility of
 the pre-reconstruction operator $S_{\Gamma, \delta}$, %to be admissible for the
%space pair $(U, \tilde U)$ is that
the test space $\tilde U$ must have its dimension larger than or equal to the one of
  the reconstruction space $U$.
For $U=V_{2, L}(\Phi_0)$ and $\tilde U=V_{2, \tilde L}(\tilde \Phi_0)$ with $\tilde L\ge L$,
least square solutions of the linear system \eqref{v2phi0galerkin}\ with $-L \le j \le L$ replaced by $-\tilde L\le j\le \tilde L$
%\begin{eqnarray*} %\label{v2phi0galerkin}
%& & \sum_{i=-L}^L
%\Big(\sum_{n=1}^N \frac{\gamma_{n+1}-\gamma_{n-1} }{2}
%\phi_0(\gamma_n-i-\theta_i)\tilde \phi_0(\gamma_n-j)\Big) c_i\nonumber\\
%& = & \sum_{n=1}^N \frac{\gamma_{n+1}-\gamma_{n-1} }{2} f(\gamma_n) \tilde \phi_0(\gamma_n-j),
%-\tilde L\le j\le \tilde L,
%\end{eqnarray*}
defines a  sub-Galerkin reconstruction
$\sum_{i=-L}^L c_i \phi_0(\cdot-i-\theta_i)\in V_{2, L}(\Phi_0)$ by Corollary  \ref{hilbertreconstruction.cor},
where  $f\in V_2(\Phi_0)$
and $\Gamma:=\Gamma_N, \Gamma_J, \Gamma_C$.
Our numerical simulations show that the above sub-Galerkin reconstructions  for different  $\tilde L\ge L$
have comparable approximation errors.
} \end{re}

{\bf Acknowledgement}  The authors  thank Dr. B. Adcock for his comments and suggestions.
The project is partially supported by the National Natural Science Foundation of China (Nos. 11201094 and 11161014),
Guangxi Natural Science Foundation (2014GXNSFBA118012),
Program for Innovative Research Team of Guilin University of Electronic Technology, and
the National Science  Foundation (DMS-1109063 and DMS-1412413).


\begin{thebibliography}{999}

\bibitem{agh2014} B. Adcock, M. Gataric and A. C. Hansen, On stable reconstruction from  univariate nonuniform Fourier measurements, arXiv1310.7820

\bibitem{agh2014b}  B. Adcock, M. Gataric and A. C. Hansen,
Weighted frames of exponentials and stable recovery of multidimensional functions from nonuniform Fourier samples, arXiv:1405.3111








\bibitem{ahp2013} B. Adcock, A. C. Hansen and C. Poon, Beyond consistent reconstructions: optimality and sharp bounds for generalized sampling, and application to the uniform resampling problem, {\em SIAM J. Math. Anal.}, {\bf 45}(2013), 3114--3131.


\bibitem{afpams98}
A.~Aldroubi and H.~Feichtinger,
\newblock Exact iterative reconstruction algorithm for multivariate irregularly
       sampled functions in spline-like spaces: the {$L_p$} theory.
 {\em Proc. Amer. Math. Soc.}, {\bf 126}(1998),  2677--2686.



\bibitem{aksiam2001} A. Aldroubi and K. Gr\"ochenig, Nonuniform sampling and reconstruction in shift-invariant spaces, {\em SIAM Rev.}, {\bf 43}(2001), 585--620.

\bibitem{astca04}
A. Aldroubi, Q. Sun and W.-S. Tang,
 Non-uniform average sampling and reconstruction in multiply generated shift-invariant spaces, {\em Constr. Approx.}, {\bf 20}(2004), 173--189.


\bibitem{astjfaa2005}
A. Aldroubi, Q. Sun and W.-S. Tang,
 Convolution, average sampling, and a Calderon resolution of the identity for shift-invariant spaces,
{\em J. Fourier Anal. Appl.}, {\bf 11}(2005), 215--244.


\bibitem{baskakov90} A. G. Baskakov,  Wiener's theorem and
asymptotic estimates for elements of inverse matrices, {\em Funktsional
Anal i Prilozhen}, {\bf 24}(1990),  64--65; translation in  {\em Funct. Anal. Appl.},
{\bf 24}(1990), 222--224.


\bibitem{bg2013} P. Berger and K. Gr\"ochenig,
Sampling and reconstruction in different subspaces by using oblique projections,
arXiv 1312.1717



\bibitem{christensen12} J. G. Christensen, Sampling in reproducing kernel Banach spacecs in Lie group, {\em J. Approx. Theor.}, {\bf 164}(2012), 179--203.


\bibitem{vetterli07} P. L. Dragotti, M. Vetterli and T. Blu,
Sampling moments and reconstructing signals of finite rate of
innovation: Shannon meets Strans-Fix,  {\em IEEE Trans. Signal
Process.}, {\bf 55}(2007),  1741--1757.


\bibitem{eldar05} Y. C. Eldar and T. Werther, General framework for consistent sampling in Hilbert spaces, {\em Int. J. Wavelets Multiresolution Inf. Process.}, {\bf 3}(2005), 347--359.


\bibitem{fgsiam92}
H. G. Feichtinger and K.~{Gr\"ochenig},
 Iterative reconstruction of multivariate band-limited
functions from irregular sampling values,
 {\em SIAM J. Math. Anal.}, {\bf 231}(1992), 244--261.

%
% \bibitem{frjfaa05} M. Fornasier and H. Rauhut, Continuous frames, function spaces, and the discretization problem, {\em J. Fourier Anal. Appl.}, {\bf 11}(2005), 245--287.

\bibitem{feichtinger2012} H. G. Feichtinger,  J. C. Principe, J. L. Romero, A. A. Singh, and G. A.  Alexander, Approximate reconstruction of bandlimited
     functions for the integrate and fire sampler, {\em Adv. Comput. Math.},
     {\bf 36}(2012), 67--78.


\bibitem{garcia13} A. G. Garcia and A. Portal, Sampling in reproducing kernel Banach spaces, {\em Mediterr. J. Math.}, {\bf 103}(2013), 1401--1417.


\bibitem{gkwieot89} I. Gohberg, M. A. Kaashoek and H. J.
Woerdeman, The band method for positive and strictly
contractive extension problems: an alternative version and new
applications,  {\em Integral Equation Oper. Theory}, {\bf 12}(1989), 343--382.


\bibitem{vetterli2014} D. Gontier and M. Vetterli, Sampling based on timing: time encoding machines on shift-invariant subspaces, {\em Appl. Computat. Harmonic Anal.}, {\bf 36}(2014), 63--78.
\bibitem{grochenig10}
K. Gr\"ochenig,  Wiener's lemma: theme and variations, an
introduction to spectral invariance and its applications, In: {\em
Four Short Courses on Harmonic Analysis: Wavelets, Frames,
Time-Frequency Methods, and Applications to Signal and Image
Analysis}, editor by P. Massopust and B. Forster, Birkhauser,
Boston, 2010.



\bibitem{gmc92} K. Gr\"ochenig, Reconstructing algorithms in irregular sampling, {\em Math. Comput.}, {\bf 59}(1992), 181--194.

\bibitem{hns09} D. Han, M. Z. Nashed and Q. Sun, Sampling expansions in reproducing kernel Hilbert and Banach spaces, {\em
    Numer. Funct. Anal. Optim.}, {\bf 30}(2009), 971--987.

%        \bibitem{hns09} D. Han, M. Z. Nashed and Q. Sun,  Sampling expansions in reproducing kernel Hilbert and Banach spaces,
%{\em Numer. Funct. Anal. Optim.},  {\bf 30}(2009), 971--987.

%    \bibitem %[VCNS]
%{mnsaicm03} C. van der Mee, M. Z. Nashed and S.
%Seatzu, Sampling expansions and interpolation in unitarily
%translation invariant reproducing kernel Hilbert spaces, {\it Adv.
%Comput. Math.,} {\bf 19}(2003), 355--372.


\bibitem{lakeybook} J. A. Hogan and J. D. Lakey,  {\em
Duration and Bandwidth Limiting: Prolate Functions, Sampling, and Applications}, Birkh\"auser, 2012.

\bibitem{jaming2014} P. Jaming, A. Karoui, R. Kerman and S. Spektor,
Approximation of almost time and band limited functions I: Hermite expansion, Arxiv 1407.1293

\bibitem{Lazar2004} A. A. Lazar and L. T. Toth, Perfect recovery and sensitivity
analysis of time encoded bandlimited signals, {\em IEEE Trans. Circuits System},
{\bf 51}(2004), 2060--2073.

\bibitem{mishalieldar2011} M. Mishali, Y. C. Eldar
 and A. J. Elron, Xampling: signal acquisition and processing in union of subspaces,
 {\em IEEE Trans. Signal Process.},  {\bf 59}(2011), 4719--4734.





\bibitem{ns2010}  M. Z. Nashed and Q. Sun,
Sampling and reconstruction of signals in a reproducing kernel subspace of $L^p(R^d)$, {\em J. Funct. Anal.}, {\bf 258}(2010), 2422--2452.



%\bibitem{nst2009} M. Z. Nashed, Q. Sun and W.-S. Uang, Average sampling in $L^2$,
%{\em C. Acad. Sci. Paris, Ser I},  {\bf 347}(2009), 1007--1010.

\bibitem{nw91} M. Z. Nashed and G. G. Walter, General sampling theorems for functions in reproducing kernel Hilbert spaces, {\em Math. Control Signals Systems}, {\bf 4}(1991), 363--390.

\bibitem{panbludragotti14} H. Pan, T. Blu and P. L. Dragotti,
Sampling curves with finite rate of innovation,  {\em IEEE Trans. Signal Process.},
{\bf 62}(2014), 458--471.


\bibitem{sire49} C. E. Shannon, Communication in the presence of noise,
 \emph{Proc. IRE}, {\bf 37}(1949), 10--21.




%\bibitem{jerri} J. A. Jerri, The shannon sampling theorem -- its
%various extensions and applications: A tutorial review, {\em Proc.
%IEEE}, {\bf 65}(1977), 1565 -- 1596.
%
%
%\bibitem{butzer83} P. L. Butzer, A survey of the Whittaker-Shannon sampling theorem and some of its extensions,
%\emph{J. Math. Res. Exposition}, {\bf 3}(1983), 185-212.
%
%
%\bibitem{higgins85} J. R. Higgins,  Five
%short stories about the cardinal series, \emph{Bull. Amer. Math.
%Soc.},  {\bf 12}(1985),  45--89.




%\bibitem{marks} R. J. Marks II, {\em
%Introduction to Shannon Sampling and Interpolation Theory},
%Springer Verlag, 1991.




\bibitem{sjostrand94} J. Sj\"ostrand,  Wiener type algebra of pseudodifferential operators, Cent.
Math., Ecole Polytechnique, Palaiseau France, Seminaire 1994, 1995,
December 1994.

\bibitem{suntams07} Q. Sun,  Wiener's lemma for infinite matrices,
{\em Trans. Amer. Math. Soc.}, {\bf 359}(2007), 3099--3123.

%\bibitem{sunacha08} Q.  Sun. {\em Wiener's lemma for localized integral operators}, Appl Comput Harmon Anal, 2008, 25: 148-167.

\bibitem{sunca11}
Q. Sun,  Wiener's lemma for infinite matrices II, {\em Constr.
Approx.}, {\bf 34}(2011),  209--235.



%\bibitem{sunaicm2010} Q. Sun,
%Local reconstruction for sampling in shift-invariant spaces,
%{\em Adv.  Computat. Math.}, {\bf  32}(2010), 335--352.



\bibitem{sunaicm08} Q. Sun, Frames in spaces with finite rate of innovation,
{\em Adv. Comput. Math.}, {\bf 28}(2008), 301--329.

\bibitem{sunsiam06} Q. Sun,  Non-uniform average sampling and reconstruction  for signals with  finite rate of
innovations,  {\em SIAM J. Math. Anal.}, {\bf 38}(2006),
1389--1422.


\bibitem{sunxian2014} Q. Sun and J. Xian, Rate of innovation for (non-)periodic signals and optimal lower stability bound for filtering,
{\em J. Fourier Anal. Appl.}, {\bf 20}(2014), 119--134.

\bibitem{sunwzhou2002} W. Sun and X. Zhou, Reconstruction of bandlimited signals from local averages, {\em IEEE Trans. Inf. Theory}, {\bf 48}(2002), 2955--2963.


\bibitem{tang1999} W.-S. Tang,
Oblique projections, biorthogonal Riesz bases and multiwavelets in Hilbert spaces,
{\em  Proc. Amer. Math. Soc.}, {\bf 128}(1999), 463--473..



\bibitem{unser2000} M. Unser,  Sampling -- 50 years after Shannon, {\em Proc. IEEE}, {\bf 88}(2000), 569--587.


\bibitem{vetterli02} M. Vetterli, P. Marziliano and T. Blu, Sampling signals
with finite rate of innovation, {\em IEEE Trans. Signal Process.},
{\bf 50}(2002), 1417--1428.


\bibitem{whittaker35} J. M. Whittaker, \emph{Interpolating Function Theory},
Cambridge University Press, London, 1935.

\end{thebibliography}
\end{document}